\documentclass{article}
\usepackage{subfiles}
\usepackage{xr-hyper}
\usepackage[usenames,dvipsnames]{xcolor}
\usepackage{amsfonts}
\usepackage{amsmath,amsthm,amssymb,dsfont}
\usepackage{enumerate}
\usepackage[english]{babel}
\usepackage{graphicx}	
\usepackage[caption=false]{subfig}
\usepackage[margin=1in]{geometry}
\usepackage{url}
\usepackage{todonotes}
\usepackage{bbm}
\usepackage{booktabs}
\usepackage{hyperref} 

\usepackage{amsmath,amssymb,amsthm,dsfont,mathrsfs}
\usepackage{makecell}
\usepackage{array,multirow,graphicx}
\usepackage{setspace}
\usetikzlibrary{arrows.meta}
\newcommand{\E}{\mathbb{E}}
\newcommand{\rEm}{\mathrm{E}_{\max}}
\newcommand{\Var}{\mathrm{Var}}
\newcommand{\Cov}{\mathrm{Cov}}

\usepackage{algorithm}
\usepackage[noend]{algorithmic}
\usepackage{enumitem}
\usepackage{tikz}
\usetikzlibrary{chains}
\usetikzlibrary{fit}
\usepackage{pgflibraryarrows}		
\usepackage{pgflibrarysnakes}		

\usepackage{makecell}

\usepackage{epsfig}
\usetikzlibrary{shapes.symbols,patterns} 
\usepackage{pgfplots}
\pgfplotsset{compat=newest}

\usepackage{mathrsfs}

\usepackage{hyperref}
\hypersetup{colorlinks=true,citecolor=blue,linkcolor=blue,filecolor=blue,urlcolor=blue,breaklinks=true}

\usepackage{nicefrac}
\usepackage{mathtools}
\usepackage{mathtools}
\usepackage[utf8]{inputenc}
\usepackage[english]{babel}
\usepackage{amssymb,amsmath,amsthm}

\newtheorem{theorem}{Theorem}
\newtheorem{lemma}{Lemma}

\theoremstyle{definition}

\theoremstyle{definition}
\newtheorem{definition}{Definition}

\newcommand{\Ac}{\mathcal{A}}
\newcommand{\Bc}{\mathcal{B}}

\newcommand{\Dc}{\mathcal{D}}
\newcommand{\Ec}{\mathcal{E}}

\newcommand{\Gc}{\mathcal{G}}

\newcommand{\Tc}{\mathcal{T}}

\newcommand{\EE}{\mathbb{E}}
\newcommand{\PP}{\mathbb{P}}

\newcommand{\ER}{\mathrm{ER}}
\newcommand{\rE}{\mathrm{E}}
\newcommand{\PV}{\nu}
\newcommand{\Hc}{\mathcal{H}}

\newcommand{\mb}{\overline{m}}
\DeclarePairedDelimiter\ceil{\lceil}{\rceil}

\newcommand{\bigO}{\mathcal{O}}

\newcommand{\barD}{\overline{\mathcal{D}}}

\usepackage{amsthm}

\usepackage[affil-it]{authblk}

\author[1]{Huy Pham}
\author[1]{Hoang Ta}
\affil[1]{\small Department of Computer Science, Hanoi University of Science and Technology, Vietnam}

\begin{document}

\title{{\LARGE A Fast Hierarchical Splitting Approach for Non-Adaptive Learning of Random Hypergraphs}}

\maketitle


\begin{abstract}     
This work focuses on the problem of learning an unknown $3$-uniform hypergraph using hyperedge-detecting queries. Our goal is to design a querying strategy that recovers the hyperedge set using as few queries as possible. We restrict our attention to random hypergraphs under the Erd\H{o}s--R\'enyi (ER) model, in which each potential hyperedge appears independently with probability $q = \Theta(n^{-3(1-\theta)})$ for $\theta \in (0;1)$. Prior work [Austhof-Reyzin-Tani, ISIT 2025] presents a testing-decoding scheme that uses $\bigO(\mb \log n)$ tests but requires a decoding time of $\Omega(n^3)$, where $\mb = q\binom{n}{3}$ denotes the expected number of hyperedges. 

In this work, we extend the binary splitting framework and adapt it to the $3$-uniform hypergraph setting. We obtain a testing-decoding scheme that recovers the hyperedge set with high probability using $\bigO(\mb \log n)$ tests and achieves decoding time $\bigO(\mb^{5/3}\log n)$ for the case $\theta > \dfrac{2}{3}$ and $\bigO(\mb^{5/3}\log^2{\mb}\log n)$ for the case $\theta \leq \dfrac{2}{3}$, the resulting runtime is subcubic in $n$ when $\theta < \dfrac{3}{5}$ and provides a different test–decoding tradeoff from existing schemes.
\end{abstract}


\section{Introduction}
\label{sec:intro}
In the graph learning problem with edge-detecting queries, we are given a graph \( G = (V, E) \) in which the edge set \( E \) is unknown. Our goal is to recover \( E \) by issuing queries on subsets of the vertex set. Each query asks whether the chosen subset contains at least one edge entirely within it; such queries are called \emph{edge-detecting queries}. The main challenge is to design querying strategies that accurately recover the underlying edge structure while using as few queries as possible. This problem has been widely studied due to its applicability in uncovering hidden relationships within a set of objects, where each object is represented as a vertex.

Most existing formulations focus on pairwise relationships, modeling the structure as a graph in which edges connect exactly two vertices. However, this abstraction can be inadequate in settings where interactions inherently involve multiple entities. To better capture such phenomena, it is natural to move beyond graphs and consider hypergraphs, where edges are replaced by hyperedges that span subsets of vertices. Among these, $3$-uniform hypergraphs—where each hyperedge contains exactly three vertices—represent the simplest nontrivial extension beyond pairwise models. In this work, we are given a $3$-uniform hypergraph and aim to recover its hyperedge set using as few hyperedge-detecting queries as possible. This problem has a wide range of applications in areas such as chemical reactions, molecular biology, and genome sequencing.

Dana Angluin and Jiang Chen have shown in \cite{ANGLUIN2008546} that at least $\Omega((2m/r)^{r/2})$ edge-detecting queries are required to learn a general hypergraph with dimension $r$ and $m$ edges. In this paper, we focus on recovering the hyperedge set of a $3$-uniform hypergraph under the Erd\H{o}s--R\'enyi (ER) model, in which each potential hyperedge appears independently with probability $q$. In particular, we generalize and extend the binary splitting framework for non-adaptive group testing~\cite{price2020fast} and non-adaptive graph learning~\cite{ta2025fastbinarysplittingapproach}. We later show that our algorithm requires \( O(\mb \log n) \) tests and achieves a decoding time of \( O(\mb^{5/3} \log^2 \mb \log n) \).

Although the graph and hypergraph settings may appear similar at first glance, extending the binary splitting framework from graphs to $3$-uniform hypergraphs introduces substantial new difficulties.

\begin{itemize}
    \item We introduce a new characterization of \emph{typical hypergraphs}. Unlike the graph setting, where it suffices to control defective blocks and block degrees, the hypergraph setting requires tracking several distinct classes of block interactions arising from hyperedges occupying one, two, or three blocks simultaneously. This richer structural characterization underpins our probabilistic analysis.

    \item We develop a new probabilistic analysis of the recursive refinement process. Although a single candidate triple may generate substantially more descendants than in the graph setting, we show that the number of surviving candidate triples remains bounded with high probability across all levels of the recursion.

    \item We extend the binary splitting framework from graphs to $3$-uniform hypergraphs by introducing a refinement procedure and accompanying analysis that handle all possible distributions of a hyperedge across recursive block partitions, while preserving the order-optimal $O(m\log n)$ test complexity and achieving efficient decoding.
\end{itemize}

\subsection{Related Work}
A hypergraph is a generalization of a graph in which a hyperedge can contain more than two vertices, rather than just a pair of vertices as in ordinary graphs. Hypergraphs have been widely studied because they allow for a richer and more flexible representation of relationships among entities. As an illustrative example, consider a chemical reaction network with $n$ distinct compounds, where a pair or a mixture of compounds may interact. In such settings, ordinary graphs fail to adequately capture the underlying relationships, since a chemical reaction may involve more than two compounds simultaneously. Hypergraphs, however, naturally model these multi-way interactions and provide a more suitable framework for representing such systems.

These interactions are usually referred to as higher-order interactions (HOIs), and they arise in a wide range of fields, such as physical systems \cite{Battiston2021}, social contagion models \cite{Social_Contagion}, and brain functions \cite{Expert2022}. Hypergraph neural networks have recently attracted significant attention in deep learning and machine learning due to their ability to model such complex higher-order relationships. For instance, \cite{Feng_You_Zhang_Ji_Gao_2019} proposes an HGNN framework for representation learning, demonstrating superior performance over state-of-the-art methods, particularly in handling multimodal data. Hypergraphs have also been applied in various problems such as brain classification \cite{Brain_Classification}, group recommendation \cite{Group_Recommendation}, and clinical prediction \cite{Clinical_Predictions}.

The graph learning problem with edge-detecting queries has been extensively studied (see \cite{Grebinski2000}, \cite{Beigel_2002}, \cite{alon2005learning}, \cite{bouvel2005combinatorial}). Angluin and Chen then generalized this learning problem to the hypergraph setting in \cite{angluin_chen}. They showed that an $r$-uniform hypergraph with $m$ edges and $n$ vertices can be learned with high probability using $\bigO(2^{4r} m \cdot \mathrm{poly}(r, \log n))$ queries. Moreover, these queries can be organized into $\bigO\bigl(\min(2^r (\log m + r)^2, (\log m + r)^3)\bigr)$ rounds. They further extended their results to almost uniform hypergraphs of dimension $r$, showing that such hypergraphs can be learned with $\bigO\bigl(2^{\bigO((1+\Delta/2)r)} \cdot m^{1+\Delta/2} \cdot \mathrm{poly}(\log n)\bigr)$ queries with high probability, where $\Delta$ denotes the difference between the maximum and minimum edge sizes. In a related line of work, Dyachkov et al.~\cite{dyachkov} study a restricted family $\mathcal{F}(t, s, \ell)$ of localized hypergraphs, where $|V| = t$, $|E| \le s$ with $s \ll t$, and each edge has size at most $\ell$, with $\ell \ll t$. The objective is to identify all edges of an unknown hypergraph $H_{un} \in \mathcal{F}(t, s, \ell)$ using the minimum number of tests. They propose an adaptive algorithm that achieves the information-theoretic lower bound, requiring at most $s\ell \log_2 t (1 + o(1))$ tests in the worst case. In contrast, the work of \cite{learninglowdegree} focuses on the class of hypergraphs with bounded degree. In particular, they prove that there exists a non-adaptive algorithm with $\bigO\bigl((2n)^{\rho \Delta + 1} \log^2 n\bigr)$ queries that learns any $\rho$-near-uniform hypergraph $H$ with maximum degree $\Delta \ge 2$ with high probability. For the case of $3$-uniform random hypergraphs under the Erd\H{o}s--R\'enyi model, \cite{austhof2025non} proposes a scheme that requires $\bigO(\mb\log^2 \mb + \mb\log \mb\log^2 n)$ tests and achieves the same order of decoding time. Although the number of tests can be reduced to $\bigO(\mb\log n)$, this improvement comes at the cost of a decoding time that is at least cubic in $n$. Recent work by Sheffield, Vassilevska Williams, and Xi \cite{sheffieldlimits}
studies exact recovery of hypergraphs under all-pairs detection oracles.
In the case of $3$-uniform hypergraphs with $n$ vertices and $m=|E|$ hyperedges, under pairwise oracle access, they obtain a recovery algorithm with complexity $\widetilde{\mathcal O}\!\left(n^{\frac{3}{4-c}} \, m^{\frac{3-c}{4-c}}\right),$ where $n^c$ denotes the cost of a single oracle query. In particular, when $c=2$, this yields a bound of $\widetilde{\mathcal O}(n^{3/2} m^{1/2})$.

The problem of learning a hypergraph can also be reformulated as learning a monotone DNF with at most $s$ terms, where each term contains at most $r$ variables. In particular, an $s$-term $r$-monotone DNF (MNDF) corresponds to a hypergraph learning problem with edge-detecting queries, where the hypergraph has $s$ edges and each edge contains at most $r$ vertices. For the case $s < r$, Hasan Abasi, Nader H. Bshouty, and Hanna Mazzawi~\cite{abasi_hasan} established a lower bound of $\bigO\bigl((r/s)^{s-1} + rs \log n\bigr)$ for learning $s$-term $r$-MNDFs. When $s \ge r$, Angluin and Chen~\cite{ANGLUIN2008546} showed a lower bound of $\Omega\bigl((2s/r)^{r/2} + rs \log n\bigr)$. For this regime, Abasi, Bshouty, and Mazzawi also proposed two nearly optimal algorithms: a deterministic algorithm requiring $(crs)^{r/2 + 1.5} + rs \log n$ membership queries for some constant $c$, and a randomized algorithm using $(c's)^{r/2 + 0.75} + rs \log n$ membership queries for some constant $c'$.

Moreover, the hypergraph learning problem can be interpreted as an instance of a complex group testing problem. In the classical group testing setting, given $n$ items with at most $d$ positives, a test on any subset returns positive if and only if the subset contains at least one positive item. A natural extension is the \emph{complex group testing} model, where a single positive item is not sufficient to yield a positive outcome. Instead, only specific combinations of items, called positive subsets, can trigger a positive result. More precisely, there exist at most $d$ such subsets, each of size at most $s$, and a test is positive if and only if it fully contains at least one of these subsets. This formulation closely aligns with the hypergraph learning problem: items correspond to vertices, positive subsets correspond to hyperedges, and the objective is to identify all hyperedges using as few tests as possible. The work of~\cite{CHIN201311} introduces a derandomized construction with two key properties: the number of tests is $\bigO\left(\frac{(d+s)^{d+s+1}}{d^d s^s} \cdot \log n\right)$, matching known bounds when $s = 1$, and each test contains exactly $k$ items for a prescribed parameter $k$.

\subsection{Technical overview and contributions}
Our work extends the technical framework of the binary splitting approach and adapts it to a non-adaptive setting for hypergraph learning using hyperedge-detecting queries under the Erd\H{o}s--R\'enyi model $\mathrm{ER}(n,q)$. Here, $n$ denotes the number of vertices and $q = \Theta\big(n^{-3(1-\theta)}\big)$ for $\theta \in (0,1)$ represents the edge probability. Let $\mb$ denote the expected number of edges in a random hypergraph $G \sim \mathrm{ER}(n,q)$, i.e., $\mb = q \binom{n}{3}$. In analogy with the graph learning setting, we treat edges as \emph{defective} triples of vertices.

We introduce a testing and decoding scheme that, with high probability, guarantees exact recovery of the hyperedge set of \(G\) using \(\mathcal{O}(\mb\log n)\) tests, while achieving decoding complexity \(\mathcal{O}(\mb^{5/3}\log^2{\mb}\log n)\) for $\theta \leq 2/3$ and \(\mathcal{O}(\mb^{5/3}\log n)\) for $\theta > 2/3$. This substantially improves upon the cubic-time decoding barrier of existing non-adaptive approaches while preserving near-linear test complexity. Our approach can be summarized as follows:
\begin{itemize}
    \item Building upon the binary splitting framework introduced in \cite{price2020fast} and \cite{ta2025fastbinarysplittingapproach}, we partition the vertex set into a ternary hierarchy of blocks across levels $\ell = \left\lceil\log_3{\mb^{1/3}}\right\rceil, \left\lceil\log_3{\mb^{1/3}}\right\rceil + 1, \dots, \log_3{n}$. At each level $\ell$, there are $3^{\ell}$ blocks denoted by $\{\mathcal{G}_1^{\ell}, \dots, \mathcal{G}_{3^{\ell}}^{\ell}\}$. For every level, we perform $\Theta(\mb^{2/3})$ independent repetitions. In each repetition, the blocks are assigned independently and uniformly at random to one of $\Theta(\mb^{1/3})$ tests, where assigning a group corresponds to including all of its vertices in the corresponding test. Since there are $\mathcal{O}(\log n)$ levels, the overall number of tests is $\mathcal{O}(\mb\log n)$.

    \item Given all test outcomes, the decoder runs deterministically in a coarse-to-fine manner across the hierarchy. It maintains a collection of possibly defective (PD) triples, discards any block triple contained in a negative test, and recursively refines the remaining triples into corresponding child triples at the next level. At the final level (i.e., singleton vertices), any triple that is not eliminated by a negative test is declared to be a hyperedge. In Theorem~\ref{thm:main}, we establish that, with high probability, this procedure successfully recovers all hyperedges of $G$ using $\bigO(\mb\log n)$ tests. Moreover, the decoding complexity is $\bigO\!\left(\mb^{5/3} \log n\right)$ when $\theta > 2/3$, and $\bigO\!\left(\mb^{5/3} \log^2 \mb \log n\right)$ when $\theta \leq 2/3$. Here, the $\mathcal{O}(\mb^{5/3})$ factor comes from the fact that we need to check $\Theta(\mb^{2/3})$ candidate tests per retained triple across levels.
\end{itemize}

Our results, as well as the existing results for the Erd\H{o}s--R\'enyi model, are summarized in Table \ref{tab:placeholder}.

\begin{table}[htbp]
    \centering
    \begin{tabular}{|c|c|c|}
    \hline
         \textbf{Reference} & \textbf{Number of tests} & \textbf{Decoding time}  \\
    \hline 
        GROTESQUE \cite{austhof2025non} & $\bigO(\mb\log^2{\mb} + \mb\log{\mb}\log^2{n})$ & $\bigO(\mb\log^2{\mb} + \mb\log{\mb}\log^2{n})$ \\
    \hline
        COMP \cite{austhof2025non} & $\mathcal{O}(\mb\log n)$ & $\Omega(n^3)$  \\
    \hline
         \textbf{Our Theorem \ref{thm:main}} & $\mathcal{O}(\mb\log n)$ & 
         $\bigO\!\left(\mb^{5/3} \log^2 \mb \log n\right)$  \\
    \hline
    \end{tabular}
    \caption{Overview of existing non-adaptive schemes for learning Erd\H{o}s--R\'enyi 3-uniform hypergraphs. Here, $n$ is the number of vertices and $\mb = q\binom{n}{3}$ is the expected number of edges.}
    \label{tab:placeholder}
\end{table}

While several components of our framework admit a formal extension to $k$-uniform hypergraphs, a major obstacle arises when $k$ is treated as a variable parameter, since the recursive refinement step generates exponentially many candidate tuples in $k$. For this reason, the present paper focuses on the $3$-uniform case.

\section{Problem Setup}
\label{sec:setup}
For clarity, we begin by considering $3$-uniform hypergraphs. Our objective is to learn the hyperedge structure of an unknown hypergraph $H = (V,E)$, where the vertex set is $V = \{1,2,\dots,n\}$ and the hyperedge set $E \subseteq \binom{V}{3}$ is random.

We adopt the Erd\H{o}s--R\'enyi (ER) hypergraph model~\cite{austhof2025non}, in which each candidate hyperedge $(u,v,w) \in \binom{V}{3}$ is included independently with probability $q = q(n)$, so that $H \sim \ER(n,q)$. After being generated, the hypergraph remains fixed but unknown, and the learner does not have direct access to the edge set $E$.

Information about $H$ is acquired through \emph{hyperedge-detecting queries}. Given a subset $S \subseteq V$, a query returns a single bit indicating whether $S$ contains at least one hyperedge from $E$. The goal is to construct a \emph{non-adaptive} set of such queries—fixed in advance—along with a decoding procedure that accurately reconstructs $E$. A more detailed description of the model follows.

Each query can be represented by a binary vector $X \in \{0,1\}^n$, where $X_i = 1$ signifies that vertex $i$ is included in the query. The corresponding outcome is $Y \;=\; \displaystyle\bigvee_{(u,v,w)\in E} \bigl( X_u \land X_v \land X_w \bigr),$
meaning that the output equals one if there exists at least one hyperedge in $E$ fully contained within the selected vertices, and zero otherwise.

In the \emph{non-adaptive} setting, the query vectors $X^{(1)},\dots,X^{(t)}$ are predetermined and cannot depend on previous outcomes, making the recovery problem more challenging compared to the adaptive setting. Based on the observed outcomes $\{Y^{(i)}\}_{i=1}^t$, a decoder produces an estimate $\widehat{H} = (V,\widehat{E})$. The performance is quantified by the error probability $P_e \;\coloneqq\; \PP\,\!\big[\,\widehat{E}\neq E\,\big],$ where the probability is taken over both the randomness of the Erd\H{o}s--R\'enyi hypergraph and the query design. The goal is to design a scheme such that $P_e \to 0$ as $n \to \infty$.
\medskip
\subsection{Mathematical and Computational Assumptions}
We focus on sparse $3$-uniform hypergraphs whose sparsity is parameterized by a constant
$\theta \in (0,1)$ via $q \;=\; \Theta\!\bigl(n^{-3(1-\theta)}\bigr),$ so that the expected number of hyperedges satisfies $\mb \;\coloneqq\; \binom{n}{3}\,q \;=\; \Theta\!\bigl(n^{3\theta}\bigr).$

Equivalently, as $\theta$ ranges over $(0,1)$, we have $n^{-3} \ll q \ll 1
   \quad\text{and}\quad
   1 \ll \mb \ll n^3,$
where $f(n)\ll g(n)$ abbreviates $f(n)=o\bigl(g(n)\bigr)$.

Throughout the paper, we work in the unit-cost word-RAM model. With $n$ vertices and $T$ tests reading any integer in $\{1,\dots,n\}$, performing basic arithmetic operations on such integers, and retrieving any test outcome (indexed by $\{1,\dots,T\}$) each takes $\bigO(1)$ time.

Without loss of generality, we assume that $n$ is a power of three.
When this is not the case, we may augment the hypergraph by adding at most
$3^{\lceil \log_3 n \rceil} - n$ dummy vertices so that the total number of
vertices equals the next power of three. Since these additional vertices are
isolated and do not participate in any hyperedges, they do not affect the
query outcomes or the recovery procedure. Consequently, all subsequent
results remain valid under this assumption.
\subsection{Level Hypergraphs}
\label{sec:level_hypergraphs}

For each $\ell \in \{\lceil \log_3 \mb^{1/3} \rceil,\dots,\log_3 n\}$, let
$g \coloneqq 3^{\ell}$, and let
$\{\mathcal{H}_1,\dots,\mathcal{H}_g\}$ be a balanced partition of $V$
into $g$ blocks, each of size $n/g$, where
\[
  \mathcal{H}_i \coloneqq \left\{ (i-1)\dfrac{n}{g}+1,\ \dots,\ i\dfrac{n}{g} \right\},
  \qquad i \in [g].
\]
A block $\mathcal{H}_i$ is called \emph{defective} if it contains at least one
hyperedge of $H$. For $t \in \{2,3\}$, we call
$(\mathcal{H}_{i_1},\dots,\mathcal{H}_{i_t})$ a \emph{defective}
if the union $\mathcal{H}_{i_1} \cup \dots \cup \mathcal{H}_{i_t}$ contains
at least one hyperedge. For $3$-uniform hypergraphs, we introduce the following parameters.
\begin{itemize}
    \item $(\Hc_g) = (V_g,E_g):$ The induced hypergraph at level $\ell$, where $g = 3^{\ell}, V_g = [g]$ and a hyperedge $(i,j,k)$ is present if and only if $(\Hc_i,\Hc_j,\Hc_k)$ is defective. We refer to $\Hc_g$ as the level-$\ell$ block hypergraph. 
    \item For a non-defective single block $\mathcal{H}_i$:
  \begin{itemize}
    \item $d_{g}^{(1,1)}(\mathcal{H}_i)$ is the number of indices
    $j \in [g]\setminus\{i\}$ such that
    $(\mathcal{H}_i,\mathcal{H}_j)$ is a defective $2$-triple but $\Hc_j$ is not defective. 

    \item $d_{g}^{(1,2)}(\mathcal{H}_i)$ is the number of pairs
    $(j_1,j_2)$ with $j_1,j_2 \in [g]\setminus\{i\}$ and $j_1 \neq j_2$
    such that $(\mathcal{H}_i,\mathcal{H}_{j_1},\mathcal{H}_{j_2})$ is a
    defective $3$-triple, but all the pairs $(\Hc_{i}, \Hc_{j_1}), (\Hc_{i}, \Hc_{j_2})$ and $(\Hc_{j_1}, \Hc_{j_2})$ are not defective.
  \end{itemize}

  \item For a non-defective pair $(\mathcal{H}_{i_1},\mathcal{H}_{i_2})$:
  \begin{itemize}
    \item $d_{g}^{(2,1)}(\mathcal{H}_{i_1},\mathcal{H}_{i_2})$ is the number
    of indices $j \in [g]\setminus\{i_1,i_2\}$ such that
    $(\mathcal{H}_{i_1},\mathcal{H}_{i_2},\mathcal{H}_j)$ is a defective
    $3$-triple, but all the pairs $(\Hc_{i_1}, \Hc_{j}), (\Hc_{i_2}, \Hc_{j})$ are not defective.
  \end{itemize}

  \item We then define the global parameters 
  \begin{align*}
d_{g}^{(1,1)} &:= \max_{\mathcal{H}_i \text{ non-defective}} d_{g}^{(1,1)}(\mathcal{H}_i),\\
d_{g}^{(1,2)} &:= \max_{\mathcal{H}_i \text{ non-defective}} d_{g}^{(1,2)}(\mathcal{H}_i),\\
d_{g}^{(2,1)} &\coloneqq \max_{(\mathcal{H}_{i_1},\mathcal{H}_{i_2}) \text{ non-defective}}
d_{g}^{(2,1)}(\mathcal{H}_{i_1},\mathcal{H}_{i_2}).
\end{align*}

  \item Finally, let $\nu_g^1$ denote the number of defective blocks among $\mathcal{H}_1,\dots,\mathcal{H}_g$, and let $\nu_g^2$ denote the number of defective pairs of blocks in which neither block is defective.
\end{itemize}

\subsection{Typical Hypergraphs}
\label{sec:typical_graphs}

To analyze the performance of the proposed decoding algorithms, we restrict attention to a class of hypergraphs that exhibit regular behavior. Specifically, we consider a high-probability subset of Erd\H{o}s--R\'enyi hypergraphs in which the total number of hyperedges, as well as various structural properties across multiple levels of partitioning, are tightly concentrated around their expected values. We refer to this collection as the \emph{typical set of hypergraphs}, and it forms the basis of our subsequent analysis.

\begin{definition}
\label{eq:definition-typical-set}
Let $(\epsilon_n)_{n \in \mathbb{N}}$ be a sequence with $\epsilon_n \to 0$. We define the $\epsilon_n-$typical set of hypergraphs $T(\epsilon_n)$ as the collection of hypergraphs $G$ satisfying the following:
\begin{enumerate}[label=(\roman*)]
  \item
  The number of hyperedges is close to its expectation, i.e.,
  \[
    (1-\epsilon_n)\,\mb \leq M \leq (1+\epsilon_n)\,\mb,
    \qquad \text{where } M = |E| \text{ and } \mb = \E[|E|]\, .
  \]

  \item
At every level $\ell \in \left\{\left\lceil \log_3 \overline{m}^{1/3} \right\rceil,\dots,\log_3 n\right\}$
with $g = 3^\ell$, the quantities $|E_g|,\nu^1_g, \nu^2_g$, and
$d_g^{(1,1)}, d_g^{(1,2)}, d_g^{(2,1)}$ satisfy
\begin{equation}
|E_g| \le E_{\max} \coloneqq \begin{cases}
    12\mb & \theta > \dfrac{2}{3}\\[2ex]
    2\mb\log^2{\mb} & \theta \leq \dfrac{2}{3}
\end{cases}
\label{eq:eg}
\end{equation}
\begin{equation}
\nu_g^1 \le \nu_{\max}^1 \coloneqq \begin{cases}
    \dfrac{3\mb}{g^2} &  \theta > \dfrac{2}{3} \\[2ex]
    3\mb^{1/3} & \theta \leq \dfrac{2}{3}
\end{cases} 
\label{eq:v1}
\end{equation}

\begin{equation}
\nu_g^2 \le \nu_{\max}^2 \coloneqq
\begin{cases}
    \dfrac{9\mb}{g} &  \theta > \dfrac{2}{3} \\[2ex]
    9\mb^{2/3} & \theta \leq \dfrac{2}{3}
\end{cases}
\label{eq:v2}
\end{equation}

and
\begin{equation}
d_g^{(1,1)} \le d^{(1,1)}_{\max} \coloneqq 18\mb^{1/3},
\label{eq:d11}
\end{equation}
\begin{equation}
d_g^{(1,2)} \le d_{\max}^{(1,2)} \coloneqq 9\mb^{2/3},
\label{eq:d12}
\end{equation}
\begin{equation}
d_g^{(2,1)} \le d_{\max}^{(2,1)} \coloneqq 18\mb^{1/3}.
\label{eq:d21}
\end{equation}
\end{enumerate}
\end{definition}

Condition (i) ensures concentration of the global hyperedge count, while condition (ii) guarantees regularity across different partition scales: the numbers of induced hyperedges and defective blocks, as well as the maximum degree among non-defective nodes, are all bounded in a manner consistent with the sparsity regime. Note that since $g \ge \mb^{1/3}$, we have $\mb/g \le \mb^{2/3}$ and $\mb/g^{2} \le \mb^{1/3}$.

\begin{lemma} \label{lem:typical}
Fix $\theta \in (0,1)$, and let $H \sim \ER(n, q)$ be a random $3$-uniform
hypergraph with $q = \Theta\!\bigl(n^{-3(1-\theta)}\bigr)$. Then there exists
a sequence $(\epsilon_n)_{n \in \mathbb{N}}$ with $\epsilon_n \to 0$ such that
\[
  \PP\big[ H \in \Tc(\epsilon_n) \big] \to 1 \quad \text{as } n \to \infty.
\]
\end{lemma}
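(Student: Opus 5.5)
The plan is a union bound over the finitely many conditions and over the $\bigO(\log n)$ levels. Each individual event will be shown to fail with probability $o(1/\log n)$, or, where the variable is a sum of independent indicators, with probability $\exp(-\Omega(\mb^{c}))$ for some $c>0$. Since $\mb=\Theta(n^{3\theta})$, any such stretched-exponential bound beats every polynomial in $n$, so it also survives union bounds over the $\mathrm{poly}(n)$ blocks, pairs and triples at a level.

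Condition (i) is immediate. $M\sim\mathrm{Bin}(\binom n3,q)$ has mean $\mb\to\infty$, so the Chernoff bound gives $\PP[|M-\mb|>\epsilon_n\mb]\le 2\exp(-\epsilon_n^2\mb/3)$. This tends to $0$ for, e.g., $\epsilon_n=\mb^{-1/3}$.

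For (ii), fix a level with $g=3^\ell\ge \mb^{1/3}$ and block size $s=n/g$.

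\emph{Counts $\nu_g^1,\nu_g^2$.} A block contains $\binom s3$ potential hyperedges, so it is defective with probability $p_1\le q s^3/6\approx \mb/g^3$. Distinct blocks are disjoint, hence $\nu_g^1\sim\mathrm{Bin}(g,p_1)$ with mean at most $\mb/g^2\le\mb^{1/3}$. Similarly, a pair with neither block defective is defective only via a hyperedge with two vertices in one block and one in the other. This has probability $p_2\lesssim q s^3\approx \mb/g^3$, and pairs use disjoint sets of potential hyperedges. So $\nu_g^2$ is dominated by $\mathrm{Bin}(\binom g2,p_2)$ with mean at most $\approx \mb/(2g)$. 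The Chernoff bound $\PP[X\ge t]\le 2^{-t}$ for $t\ge 6\EE X$ would give $\nu_g^1\le 3\mb/g^2$ whenever $\mb/g^2\to\infty$, i.e.\ at coarse levels. At fine levels where the mean is small, I would instead use the absolute bound $\PP[X\ge t]\le (e\mu/t)^t$ with $t=3\mb^{1/3}$, resp.\ $9\mb^{2/3}$. This explains the split in the statement. For $\theta>2/3$ one has $\mb/g^2\ge \mb/n^2=\Theta(n^{3\theta-2})\to\infty$ at every level, so the multiplicative Chernoff bound suffices throughout. For $\theta\le2/3$ it does not, and the level-independent thresholds $3\mb^{1/3}$, $9\mb^{2/3}$ are handled by the $(e\mu/t)^t$ bound, since $\mu\le t/3$.

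\emph{Degrees $d_g^{(1,1)},d_g^{(1,2)},d_g^{(2,1)}$.} For a fixed block $i$, the events ``$(\Hc_i,\Hc_j)$ defective via a cross hyperedge'' over $j\neq i$ involve disjoint sets of potential hyperedges, so they are independent. Hence $d_g^{(1,1)}(\Hc_i)$ is dominated by a binomial with mean $\lesssim g\cdot\mb/g^3\le\mb^{1/3}$. For $d^{(2,1)}$ of a fixed pair, each $j$ needs a hyperedge meeting $\Hc_{i_1},\Hc_{i_2},\Hc_j$ in one vertex each; the mean is $\lesssim g\cdot\mb/g^3\le \mb^{1/3}$, again with independent summands. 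For $d^{(1,2)}$ the summands are indexed by pairs $(j_1,j_2)$ with a rainbow hyperedge; these again use disjoint hyperedge sets, and the mean is $\lesssim g^2\mb/g^3\le\mb^{2/3}$. Conditioning on non-defectiveness only removes hyperedges, and all counts are increasing in $E$, so by Harris/FKG (or by simply dropping the conditioning) the tail bounds are unaffected. The bound $\PP[X\ge 18\mb^{1/3}]\le 2^{-18\mb^{1/3}}$ then survives a union bound over $g^2\le n^2$ blocks or pairs and $\bigO(\log n)$ levels.

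\emph{$|E_g|$: the main obstacle.} $|E_g|$ counts block triples $(i,j,k)$ with $\Hc_i\cup\Hc_j\cup\Hc_k$ containing a hyperedge, and these events are not independent across triples. For instance, a defective block $\Hc_i$ makes every triple containing $i$ defective, giving $\nu_g^1\cdot g^2$ triples. Similarly, a defective pair yields $\nu_g^2\cdot g$ triples. My plan is to bound $|E_g|$ deterministically:
\[
|E_g|\le \nu_g^1 g^2+\nu_g^2 g + M,
\]
where the last term bounds the triples that are defective only through a rainbow hyperedge, of which there are at most $M$. This reduces $|E_g|$ to the already controlled quantities. For $\theta>2/3$ this gives at most $3\mb+9\mb+(1+\epsilon_n)\mb$, of order $12\mb$ up to adjusting constants.

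For $\theta\le 2/3$, the crude products $3\mb^{1/3}g^2$ are too large at fine levels. There I would refine the count: triples through a defective block number $\nu_g^1\cdot\binom g2$ only if we count all $g^2$ partners, but the algorithm only needs $|E_g|$ as defined. So I would instead show directly that at levels with $g^3\gg\mb$ we have $\nu_g^1=0$ w.h.p. (by Markov, since $\EE\nu_g^1\le\mb/g^2$ is small there), and likewise $\nu_g^2 g$ is small. The $\log^2\mb$ slack is meant to absorb the intermediate window where $\mb/g^2$ is between $\mb^{-o(1)}$ and polylog. Making this window bookkeeping precise, and checking that the union over $\bigO(\log n)$ levels still has vanishing total probability, is the step I expect to be delicate.
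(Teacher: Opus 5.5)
Your handling of condition (i), of $\nu_g^1,\nu_g^2$, and of the three degree parameters is essentially the paper's. Both use binomial domination via disjoint sets of potential hyperedges, a Chernoff tail, and union bounds over blocks, pairs and levels. Your reduction $|E_g|\le \nu_g^1g^2+\nu_g^2g+M$ is also the paper's inequality, up to the count of triples through a defective block.

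\textbf{The gap.} The genuine gap is the case $\theta\le 2/3$ of the bound $|E_g|\le 2\mb\log^2\mb$, which you leave open. The route you sketch for it rests on a false step. The condition $g^3\gg\mb$ does not make $\EE\nu_g^1\approx\mb/g^2$ small: at $g=\omega\,\mb^{1/3}$ with $\omega\to\infty$ slowly, it is about $\mb^{1/3}/\omega^2\to\infty$. So $\nu_g^1=0$ w.h.p.\ only once $g^2\gg\mb$, and you would still have to control every intermediate level separately.

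\textbf{The missing idea.} No concentration of $\nu_g^1,\nu_g^2$ is needed here: take expectations in your deterministic inequality. Linearity is indifferent to the dependence between block triples, so at every level
\[
\EE|E_g|\le \frac{g^2}{2}\,\EE\nu_g^1+g\,\EE\nu_g^2+\EE M\le \frac{g^2}{2}\cdot\frac{\mb}{g^2}+g\cdot\frac{3\mb}{g}+\mb=\frac{9}{2}\mb .
\]
Markov then gives $\PP(|E_g|>2\mb\log^2\mb)\le 9/(4\log^2\mb)=\bigO(1/\log^2 n)$. A union bound over the $\bigO(\log n)$ levels leaves $\bigO(1/\log n)=o(1)$. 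This is exactly what the $\log^2\mb$ slack is for, and no window bookkeeping is required.

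\textbf{Two smaller corrections.}

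First, for $\theta>2/3$ your count gives $3\mb+9\mb+(1+\epsilon_n)\mb>12\mb$. The thresholds in the definition of $\Tc(\epsilon_n)$ are fixed, so you cannot simply adjust constants. Count the block triples through a defective block as $\binom{g-1}{2}\le g^2/2$ and use $M\le\frac32\mb$; this gives $\frac32\mb+9\mb+\frac32\mb=12\mb$.

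Second, you dropped a factor of $6$ in the cross-block probabilities, since $qs^3\approx 6\mb/g^3$ rather than $\mb/g^3$. The corrected means are:
\begin{itemize}
\item $\EE\nu_g^2\lesssim 3\mb/g$;
\item the binomial counts dominating $d_g^{(1,1)}(\Hc_i)$ and $d_g^{(2,1)}(\Hc_{i_1},\Hc_{i_2})$ have mean $\lesssim 6\mb/g^2$;
\item the count dominating $d_g^{(1,2)}(\Hc_i)$ has mean $\lesssim 3\mb/g$.
\end{itemize}
Together with $\EE\nu_g^1\le\mb/g^2$, every threshold is therefore only about three times its mean. The bound $2^{-t}$ for $t\ge 6\mu$ that you invoke does not apply. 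Use $(e\mu/t)^t\le(e/3)^t$ instead, as the paper does; it still beats the polynomial union bounds over blocks and pairs.
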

The proof can be found in Appendix~\ref{sec:appendix_lemm_typical}.

\section{Hierarchical Splitting Approach for Hypergraph Learning}
\label{sec:binary_spliting}

We now introduce our algorithm for non-adaptive hypergraph recovery. The approach builds on techniques from combinatorial group testing and hierarchical binary splitting~\cite{price2020fast,cheraghchi2020combinatorial}. It comprises two main components: the design of a collection of tests and a decoding procedure that reconstructs the hyperedge set from the observed outcomes.

The central idea is to recursively partition the vertex set into groups and test triples of these groups to infer the presence of hyperedges. At each level, the vertex set is refined into smaller subsets, and randomized test assignments are used to detect interactions among them. Crucially, negative test outcomes enable the algorithm to efficiently eliminate large sets of candidate hyperedge triples, thereby narrowing down the search space.

\subsection{Testing Procedure}
\label{sec:testing}

We construct a ternary tree of vertex blocks across levels \( \ell = \ceil{\log_3 \mb^{1/3}}, \ceil{\log_3 \mb^{1/3}} + 1, \dotsc, \log_3 n \). Each node in the tree corresponds to a block. At level \( \ell \), there are \( 3^{\ell} \) blocks \( \{\Hc^{(\ell)}_1, \dotsc, \Hc^{(\ell)}_{3^{\ell}}\} \), each containing \( n/3^{\ell} \) vertices. The testing is randomized: in each repetition, every block is assigned uniformly at random to one of \( C_1 \mb^{1/3} \) tests, and this procedure is repeated \( C_2 \mb^{2/3} \) times, for some constants \( C_1, C_2 > 0 \) to be chosen later. The number of repetitions is proportional to \( \mb^{2/3} \), which balances accuracy and test efficiency.

\begin{algorithm}[H]
	\caption{Testing Procedure}
	\label{alg:testing}
	\begin{algorithmic}[1]
		\REQUIRE Number of nodes \( n \), expected number of hyperedges \( \mb \), constants $C_1,C_2,C'$ with $C_1 \geq 155, C_2 = C_1^3, C' >4$
		\STATE Set \( \ell_{\min} \gets \ceil{\log_3 \mb^{1/3}} \)
		\FOR{each level \( \ell = \ell_{\min}, \dotsc, \log_3 n - 1 \)}
		\FOR{each iteration in \( \{1, \dotsc, C_2 \mb^{2/3}\} \)} 
        \STATE Initialize a sequence of $C_1 \overline{m}^{1/3}$ tests
		\FOR{each block \( j = 1, \dotsc, 3^{\ell} \)}
		\STATE Assign block \( \Hc^{(\ell)}_j \) to a randomly chosen test among the $C_1 \mb^{1/3}$ tests
		\ENDFOR
		\ENDFOR
		\ENDFOR
		\STATE At the final level $\ell = \log_3 n$, each block reduces to a singleton, i.e., $\Hc_{j}^{(\ell)} = \{j\}$. Repeat steps 3--6 for $C' \log n$ rounds, with each round consisting of $C_1 C_2 \mb$ tests.
	\end{algorithmic}
\end{algorithm}

\subsection{Decoding Procedure}
\label{sec:decoding}
The decoder is deterministic and processes the levels from coarse to fine. At each level $\ell$ we keep a set of \emph{possible defective} triples $\mathcal{PD}^{(\ell)}$, whose elements are block triples $(\Hc_i,\Hc_j,\Hc_k)$ that may still contain a true hyperedge at the current resolution.

If a triple $(\Hc_i,\Hc_j,\Hc_k)\in\mathcal{PD}^{(\ell)}$ is not eliminated, we refine it as follows.
Each block is partitioned into three children
\[
\Hc_i \to \{\Hc_i^{(L)},\Hc_i^{(M)},\Hc_i^{(R)}\}, \quad
\Hc_j \to \{\Hc_j^{(L)},\Hc_j^{(M)},\Hc_j^{(R)}\}, \quad
\Hc_k \to \{\Hc_k^{(L)},\Hc_k^{(M)},\Hc_k^{(R)}\}.
\]
Let
\[
\mathcal{C}_{ijk}
:= \{\Hc_i^{(L)},\Hc_i^{(M)},\Hc_i^{(R)},
      \Hc_j^{(L)},\Hc_j^{(M)},\Hc_j^{(R)},
      \Hc_k^{(L)},\Hc_k^{(M)},\Hc_k^{(R)}\}.
\]
We then add to $\mathcal{PD}^{(\ell+1)}$ all triples consisting of three distinct
elements chosen from $\mathcal{C}_{ijk}$, namely $\displaystyle\binom{|\mathcal{C}_{ijk}|}{3}.$ Since $|\mathcal{C}_{ijk}|=9$, this refinement produces $\displaystyle\binom{9}{3}=84$ child
triples. All of these triples can capture hyperedges potentially connecting three blocks, spanning two blocks, or lying entirely within one block.

In this way the procedure only removes triples or refines them into smaller ones. At the final (singleton) level, every surviving vertex triple not ruled out by any negative test is declared a hyperedge.

\begin{algorithm}[H]
	\caption{Decoding Procedure}
	\label{alg:decoding}
	\begin{algorithmic}[1]
		\REQUIRE Test outcomes \( \{Y^{(i)}\}_{i=1}^t \) from Algorithm~\ref{alg:testing}, number of nodes \( n \), expected number of hyperedges \( \mb \)

		\STATE Initialize candidate set
		\[
		\mathcal{PD}^{(\ell_{\min})}
		=
		\left\{
		(\Hc^{(\ell_{\min})}_i,\Hc^{(\ell_{\min})}_j,\Hc^{(\ell_{\min})}_k)
		:\;
		1 \le i < j < k \le 3^{\ell_{\min}}
		\right\}
		\]

		\FOR{each level \( \ell = \ell_{\min}, \dotsc, \log_3 n - 1 \)}
			\FOR{each triple \( (\Hc_i,\Hc_j,\Hc_k) \in \mathcal{PD}^{(\ell)} \)}
				\IF{no negative test contains \( \Hc_i \), \( \Hc_j \), and \( \Hc_k \)}
					\STATE Add all child triples generated from \( \Hc_i \), \( \Hc_j \), and \( \Hc_k \) to \( \mathcal{PD}^{(\ell+1)} \)
				\ENDIF
			\ENDFOR
		\ENDFOR

		\STATE Let \( \widehat{E} \) be the set of triples in \( \mathcal{PD}^{(\log_3 n)} \) that are not ruled out by any negative test at the final level

		\STATE Return hypergraph estimate \( \widehat{H} = (V, \widehat{E}) \)
	\end{algorithmic}
\end{algorithm}

%
%
\subsection{Algorithmic Guarantees}
\label{sec:algorithmic_guarantees}

In this section, we analyze the correctness and efficiency of our approach. We show that Algorithms~\ref{alg:testing} and~\ref{alg:decoding} achieve an order-optimal number of tests and, with high probability, recover the entire hyperedge set of the underlying Erd\H{o}s--R\'enyi hypergraph. We also establish the computational complexity of the decoding procedure. First, we will demonstrate that our algorithm will not output any false negative results.

\begin{lemma}[No False Negatives]
\label{lem:false_negative}
For every hyperedge $e\in E$ and every level $\ell$ of the recursive
decoding procedure, there exists a candidate block triple
$(\mathcal{H}_u^{(\ell)},\mathcal{H}_v^{(\ell)},\mathcal{H}_w^{(\ell)})$
such that
\[
e\subseteq
\mathcal{H}_u^{(\ell)}
\cup
\mathcal{H}_v^{(\ell)}
\cup
\mathcal{H}_w^{(\ell)}.
\]
Moreover, no such candidate triple can be eliminated by a negative test.
Consequently, every true hyperedge is preserved throughout the decoding
procedure, and hence $E\subseteq \widehat{E}.$
\end{lemma}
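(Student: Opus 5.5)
The plan is to argue by induction on the level $\ell$, from $\ell_{\min}$ up to $\log_3 n$. The invariant is: for every hyperedge $e\in E$, the set $\mathcal{PD}^{(\ell)}$ contains a block triple $(\Hc_u^{(\ell)},\Hc_v^{(\ell)},\Hc_w^{(\ell)})$ of three distinct level-$\ell$ blocks whose union contains $e$. The key observation is monotonicity of the hyperedge-detecting query. If a test $S$ contains the three blocks $\Hc_u,\Hc_v,\Hc_w$, then $S\supseteq e$, so its outcome is $Y=1$. Hence no negative test can contain all three blocks of such a triple. This gives the "moreover" part at every level, including the final singleton level.

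For the base case $\ell=\ell_{\min}$, $\mathcal{PD}^{(\ell_{\min})}$ consists of all triples $i<j<k$. The vertices of $e$ lie in one, two, or three distinct blocks, and we complete them to three distinct blocks by adding arbitrary extra blocks. This needs $3^{\ell_{\min}}\ge 3$, which holds because $\mb\to\infty$. The enlarged union still contains $e$, so the invariant holds.

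For the inductive step, suppose $(\Hc_u,\Hc_v,\Hc_w)\in\mathcal{PD}^{(\ell)}$ covers $e$. By the monotonicity observation, it is not eliminated, so all triples of three distinct elements of $\mathcal{C}_{uvw}$ are added to $\mathcal{PD}^{(\ell+1)}$. Each vertex of $e$ lies in one of the nine children in $\mathcal{C}_{uvw}$, because the children partition $\Hc_u\cup\Hc_v\cup\Hc_w$. So $e$ is covered by a set of at most three distinct children. If fewer than three are needed, we pad with other elements of $\mathcal{C}_{uvw}$, which is possible since $|\mathcal{C}_{uvw}|=9\ge 3$. The padded triple consists of distinct children and so is one of the $84$ refinements; it lies in $\mathcal{PD}^{(\ell+1)}$ and covers $e$. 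The children are distinct blocks because $\Hc_u,\Hc_v,\Hc_w$ are distinct and disjoint.

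At the final level $\ell=\log_3 n$, blocks are singletons. A covering triple of three distinct singletons whose union contains the three-element set $e$ must be exactly $e$. So $e\in\mathcal{PD}^{(\log_3 n)}$, and by monotonicity no final-level negative test rules it out, giving $e\in\widehat{E}$. The only mildly delicate point is the padding when $e$ is concentrated in fewer than three blocks. It needs care to check that the padded triple is genuinely generated by the refinement rule, i.e. consists of three distinct elements of $\mathcal{C}_{uvw}$. Once that is checked, the rest is routine.
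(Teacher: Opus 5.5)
Your proposal is correct and follows essentially the same route as the paper. Both argue by induction over levels with the invariant that some triple of distinct blocks in $\mathcal{PD}^{(\ell)}$ covers $e$. Both rely on the observation that any test containing all three blocks of such a triple contains $e$ and is therefore positive, pad with extra children to obtain one of the $84$ refinements, and identify the covering triple with $e$ itself at the singleton level. Your explicit checks that $3^{\ell_{\min}}\ge 3$ and that the padded children are distinct are small details the paper leaves implicit.
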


\begin{proof}
    We now proceed by induction on the level. At the initial level, the possibly defective set is defined as
    \begin{align*}
        \mathcal{PD}^{(\ell_{\min})}
        =
        \left\{
        (\Hc^{(\ell_{\min})}_i,\Hc^{(\ell_{\min})}_j,\Hc^{(\ell_{\min})}_k)
        : 1 \le i < j < k \le 3^{\ell_{\min}}
        \right\}.
    \end{align*}
    
    Fix an arbitrary hyperedge $e=\{x,y,z\}\in E$. At level $\ell_{\min}$, the vertices of $e$ may lie in one, two, or three distinct blocks. In all cases, there exist distinct indices $i,j,k$ such that
    $e\subseteq \Hc_i^{(\ell_{\min})}\cup \Hc_j^{(\ell_{\min})}\cup \Hc_k^{(\ell_{\min})}$.
    Hence, there exists a triple in $\mathcal{PD}^{(\ell_{\min})}$ whose union contains $e$. 
    
    Suppose that at some level $\ell$, there exists a triple
    $(\Hc_i^{(\ell)},\Hc_j^{(\ell)},\Hc_k^{(\ell)})\in\mathcal{PD}^{(\ell)}$
    such that $e\subseteq \Hc_i^{(\ell)}\cup\Hc_j^{(\ell)}\cup\Hc_k^{(\ell)}$.
    Since the triple contains the hyperedge $e$, there are only two possibilities. Either the three blocks are assigned to the same test, in which case the test contains $e$ and is therefore positive, or they are not assigned to the same test, in which case the triple is not examined in that round. In either case, the triple cannot be eliminated; therefore, it survives to the next level and is further refined.

    When the surviving triple is refined, each of its blocks is partitioned into child blocks at level $\ell+1$. The vertices $x,y,z$ therefore lie in one, two, or three of these child blocks. By the construction of $\mathcal{PD}^{(\ell+1)}$, these child blocks can be extended, if necessary, to a triple of distinct child blocks whose union contains $e$. Hence, there exists
    $(\Hc_{i'}^{(\ell+1)},\Hc_{j'}^{(\ell+1)},\Hc_{k'}^{(\ell+1)})\in\mathcal{PD}^{(\ell+1)}$
    such that $e\subseteq \Hc_{i'}^{(\ell+1)}\cup\Hc_{j'}^{(\ell+1)}\cup\Hc_{k'}^{(\ell+1)}$.

    Thus, by induction, at every level there exists a possibly defective block triple whose union contains $e$. At the final level, all blocks are singletons, so the surviving triple containing $e$ is precisely $(\{x\},\{y\},\{z\})$. Therefore, $e\in\widehat{E}$. Since $e\in E$ was arbitrary, we conclude that $E\subseteq\widehat{E}$.
\end{proof}

\begin{theorem}
\label{thm:main}
Fix \( \theta \in (0,1) \), and let \( H \sim \ER(n, q) \) with \( q = \Theta(n^{-3(1 - \theta)}) \). 
Let \( \mb = q \binom{n}{3} = \Theta(n^{3\theta}) \) denote the expected number of hyperedges. 
There exist constants \( C_1 \geq 155 \), \( C_2 = C_1^3 \), and \( C' > 4 \) such that, with \( \bigO(\mb \log n) \) tests, the following holds with probability at least $1 - o(1)$:
\begin{enumerate}[label=(\alph*)]
    \item If \( \theta > 2/3 \):
    \begin{itemize}
        \item The returned estimate \( \widehat{E} \) equals \( E \);
        \item The decoding time is \( \bigO \!\left(\mb^{5/3} \log n \right) \); 
    \end{itemize}
    
    \item If \( \theta \leq 2/3 \):
    \begin{itemize}
        \item The returned estimate \( \widehat{E} \) equals \( E \);
        \item The decoding time is \( \bigO \!\left(\mb^{5/3} \log^2 \mb \log n \right) \). 
    \end{itemize}
\end{enumerate}
\end{theorem}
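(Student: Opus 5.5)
The plan has three parts. First, count the tests. Then condition on the typical set from Lemma~\ref{lem:typical} and on the randomness of the test design. Finally, bound the number of surviving non-defective triples at each level, which gives both correctness and running time.

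\textbf{Test count.} There are $\bigO(\log n)$ intermediate levels. Each uses $C_2\mb^{2/3}$ repetitions of $C_1\mb^{1/3}$ tests, so each level costs $\bigO(\mb)$ tests. The final level adds $C'\log n$ rounds of $C_1C_2\mb$ tests. The total is $\bigO(\mb\log n)$.

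\textbf{Correctness.} By Lemma~\ref{lem:false_negative}, $E\subseteq\widehat E$, so it remains to exclude false positives. Fix a level $\ell$ with $g=3^\ell$ and a non-defective block triple $(\Hc_i,\Hc_j,\Hc_k)$. In one repetition it is eliminated if all three blocks land in the same test and that test is negative. They share a test with probability $1/(C_1\mb^{1/3})^2$. Given that they share a test $\tau$, the test is positive only if $\tau$ also receives blocks completing a defective configuration. These configurations are:
\begin{itemize}
\item some defective block, of which there are at most $\nu^1_{\max}$;
\item a defective pair of non-defective blocks, at most $\nu^2_{\max}$;
\item a block $\Hc_j$ forming a defective pair with one of $i,j,k$, at most $3d^{(1,1)}_{\max}$;
\item a pair completing a defective triple with one of the three blocks, at most $3d^{(1,2)}_{\max}$;
\item a block completing a defective triple with a pair from $\{i,j,k\}$, at most $3d^{(2,1)}_{\max}$.
\end{itemize}
The event also covers a hyperedge formed by two of the fixed blocks with an outside block.

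By a union bound over these configurations, with test occupancy probability $1/(C_1\mb^{1/3})$, the conditional positivity probability is at most
\[
\frac{\nu^1_{\max}+3d^{(1,1)}_{\max}+3d^{(2,1)}_{\max}}{C_1\mb^{1/3}}+\frac{\nu^2_{\max}+3d^{(1,2)}_{\max}}{C_1^2\mb^{2/3}}.
\]
Using the typical-set bounds, each term is an absolute constant divided by $C_1$ or $C_1^2$, which is at most $1/2$ once $C_1\ge155$. So one repetition eliminates the triple with probability $p\ge 1/(2C_1^2\mb^{2/3})$. Over $C_2\mb^{2/3}=C_1^3\mb^{2/3}$ repetitions, the survival probability is at most $(1-p)^{C_2\mb^{2/3}}\le e^{-C_1/2}$, a small constant $\rho$.

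At the final level the same computation applies, but with $\bigO(\mb)$ tests per round and $C'\log n$ rounds. There the survival probability is $n^{-\Omega(C')}$, and a union bound over the surviving candidates (at most $\mathrm{poly}(n)$, in fact $\bigO(\mb)$) makes all false positives vanish, since $C'>4$.

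\textbf{Running time.} Let $N_\ell$ be the number of non-defective triples in $\mathcal{PD}^{(\ell)}$. The defective ones number at most $|E_g|\le E_{\max}$, and each spawns 84 children. A surviving non-defective triple spawns 84 children, each of which is non-defective. Hence
\[
\E[N_{\ell+1}]\le 84\,E_{\max}+84\rho\,\E[N_\ell].
\]
For $C_1$ large we have $84\rho<1/2$, so $\E[N_\ell]=\bigO(E_{\max})$ uniformly in $\ell$. The base case is $N_{\ell_{\min}}\le 3^{3\ell_{\min}}=\bigO(\mb)$. A Markov bound on $\sum_\ell N_\ell$ then gives $\bigO(E_{\max}\log n)$ candidates in total with probability $1-o(1)$.

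Checking one triple means scanning its $C_2\mb^{2/3}$ repetitions, and in each one comparing three test indices and reading one outcome. So decoding costs $\bigO(\mb^{2/3}E_{\max}\log n)$. Substituting $E_{\max}=12\mb$ gives $\bigO(\mb^{5/3}\log n)$ for $\theta>2/3$, and $E_{\max}=2\mb\log^2\mb$ gives $\bigO(\mb^{5/3}\log^2\mb\log n)$ otherwise. The final level has $\bigO(\log n)$ rounds per triple, which is dominated.

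\textbf{Main obstacle.} The hardest step is the per-repetition elimination bound. The eliminations of different triples are correlated because they share the same random assignment, and the degree parameters in the typical set are defined only relative to non-defective blocks and pairs. To get the recursion I would try to use only linearity of expectation. This avoids needing independence between triples, since each triple's survival is bounded conditioned on the typical hypergraph, with randomness only over the fresh level-$\ell$ assignments. Some care is needed for triples whose members are themselves defective blocks or pairs. Those triples are counted among the defective triples in $|E_g|$, so they are charged to $E_{\max}$ rather than eliminated.
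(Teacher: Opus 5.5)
\textbf{Gap: the running-time bound does not hold with probability $1-o(1)$.} From $\E\bigl[\sum_\ell N_\ell\bigr]=\bigO(E_{\max}\log n)$, Markov's inequality only gives $\PP\bigl(\sum_\ell N_\ell > K E_{\max}\log n\bigr)\le c/K$, and this is a constant for every fixed $K$. So you get one of two weaker statements. Either the decoding time is $\bigO(\mb^{5/3}\log n)$ with probability $1-\delta$ for each fixed $\delta$, with the constant depending on $\delta$. Or it holds with probability $1-o(1)$ only after inflating the bound by an arbitrary $\omega(1)$ factor. Neither is the theorem's claim.

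In case (b) a first-moment argument can be rescued if you average $|E_g|$ over $H$ instead of bounding it by $E_{\max}$. Since $\E|E_g|\le 9\mb/2$, the $\log^2\mb$ factor then absorbs Markov's loss. In case (a) there is no such slack. Your last paragraph deliberately uses only linearity of expectation to avoid the correlations between triples that share the level-$\ell$ assignment. Controlling exactly those correlations is what the high-probability statement requires.

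The rest of your proposal is sound:
\begin{itemize}
\item Your test count matches the paper's.
\item The recursion $\E[N_{\ell+1}]\le 84E_{\max}+84\rho\,\E[N_\ell]$ is a clean way to get $\E[N_\ell]=\bigO(E_{\max})$ without the paper's conditioning.
\item Your correctness argument works. At the last level you can even take a deterministic union bound over all $\binom{n}{3}$ triples, so no bound on $|\mathcal{PD}^{(\log_3 n)}|$ is needed.
\end{itemize}

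\textbf{How the paper closes this step.} It uses a second-moment bound. Write $B=C_1\mb^{1/3}$. For two non-defective triples sharing $s$ blocks, the paper shows $|\Cov(\mathrm{E}_{uvw},\mathrm{E}_{u'v'w'})|=\bigO(B^{-2})$ when $s\in\{0,1\}$ and $\bigO(B^{-1})$ when $s=2$. Now condition on $|\mathcal{PD}^{(\ell)}|\le 168E_{\max}$. Then the number of surviving non-defective triples has mean at most $E_{\max}/2$ and variance $\bigO(E_{\max}^2/\mb^{1/3})$. Chebyshev gives at most $E_{\max}$ such survivors, hence $|\mathcal{PD}^{(\ell+1)}|\le 168E_{\max}$, except with probability $\bigO(\mb^{-1/3})$. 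A union bound over the $\bigO(\log n)$ levels gives overall failure probability $\bigO(\log n/\mb^{1/3})=o(1)$, because $\mb=\Theta(n^{3\theta})$. You need this, or some other genuine concentration argument, to finish.

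\textbf{Two minor slips.}
\begin{itemize}
\item Your list of configurations omits a hyperedge spread over three outside, pairwise non-defective blocks. That case costs at most $M/(C_1\mb^{1/3})^3\le(1+\epsilon_n)/C_1^3$.
\item Your first term is $(3+54+54)/C_1=111/C_1$, which exceeds $1/2$ at $C_1=155$.
\end{itemize}
Both are harmless. Any constant bound below $1$ on the conditional positivity probability suffices. For example, the paper's bound of $149/C_1$ gives survival at most $e^{-(C_1-149)}\le e^{-6}$, and then $84\rho<1/2$. You are also free to enlarge $C_1$.
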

From Lemma~\ref{lem:typical}, the Erdős--Rényi hypergraph belongs to the typical set (see Section~\ref{sec:typical_graphs}) with probability $1-o(1)$. Therefore, it suffices to fix an arbitrary hypergraph $H \in \Tc(\epsilon_n)$ and prove the theorem under this assumption. Throughout the remainder of the proof, the hypergraph $H$ is regarded as fixed, and all probabilities and expectations are taken only over the randomness of the testing procedure.

Before proving Theorem~\ref{thm:main}, we first estimate the size of $\mathcal{PD}^{(\ell)}$ for all levels $\ell$. 
For a given $\ell$, let $g = 3^{\ell}$. 
For $v \in [g]$, denote by $\Hc_v$ the $v$-th node, i.e., the $v$-th block in $H$. 
A node $\Hc_v$ is called \emph{non-defective} if $\Hc_v$ contains no hyperedge, and a triple $(\Hc_u,\Hc_v, \Hc_w)$ is called \emph{non-defective} if their union contains no hyperedges. 
For a single random test, let $Y$ denote the test outcome and $\mathcal{L}$ the set of node indices included in the test.

For distinct $u,v,w$, let $(\Hc_u, \Hc_v,\Hc_w)$ be a non-defective triple at level~$\ell$, and let $\Ec_{uvw}$ be the event that this triple is not identified at level~$\ell$ in Algorithm~\ref{alg:testing}. 
Let $\rE_{uvw}$ denote the corresponding indicator random variable, then the total number of unidentified non-defective triples at level $\ell$ is $\displaystyle\sum_{u,v,w} \rE_{uvw}$. For each node $u$, denote by $h(u) \in \{1,2,\dots,C_1\mb^{1/3}\}$ the index of the test containing $u$ in one iteration (out of $C_2\mb^{2/3}$ iterations). 
The dependence of these quantities on $\ell$ is left implicit. We write $\EE_{\ell}[\cdot]$ for conditional expectation given all test placements at earlier levels.


%
%
\begin{lemma}
\label{lem:expectation}
Let $H\in\Tc(\epsilon_n)$ be an arbitrary typical hypergraph. Conditioned on the $\ell$-th level having $|\mathcal{PD}^{(\ell)}| \leq 168 \rEm$, where $\rEm$ is given by Eq.~\eqref{eq:definition-typical-set}. Then, for any
$C_1\ge155$ and $C_2=C_1^3$
\[
\EE_{\ell}\!\left(\sum_{u,v,w}\rE_{uvw}\right)
\le
\frac{\rEm}{2},
\]
where the sum is taken over all non-defective triples in
$\mathcal{PD}^{(\ell)}$.
\end{lemma}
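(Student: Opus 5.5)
The plan is to bound, for each fixed non-defective triple $(\Hc_u,\Hc_v,\Hc_w)\in\mathcal{PD}^{(\ell)}$, the probability $\PP_\ell[\Ec_{uvw}]$ that it survives all $C_2\mb^{2/3}$ iterations at level $\ell$. I would then sum over at most $|\mathcal{PD}^{(\ell)}|\le 168\rEm$ triples using linearity of expectation. The iterations are independent, and the placements at level $\ell$ are independent of earlier levels. So it suffices to lower-bound the per-iteration elimination probability $p$, namely the probability that $h(u)=h(v)=h(w)$ and that this common test is negative. Then $\PP_\ell[\Ec_{uvw}]\le (1-p)^{C_2\mb^{2/3}}\le \exp(-p\,C_2\mb^{2/3})$.

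\textbf{Step 1 (co-location).} Since $u,v,w$ are distinct blocks placed independently and uniformly, $\PP[h(u)=h(v)=h(w)]=(C_1\mb^{1/3})^{-2}$.

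\textbf{Step 2 (negativity given co-location).} Condition on $h(u)=h(v)=h(w)=t$. All other blocks remain independently uniform over the tests. The test $t$ is positive only if the union of its blocks contains a hyperedge. I would classify such a hyperedge by how many of its (at most three) blocks lie in $\{u,v,w\}$, and union-bound each class using the typical-set parameters of Definition~\ref{eq:definition-typical-set}:
\begin{itemize}
\item a defective block $j\notin\{u,v,w\}$ lands in $t$: at most $\nu^1_g/(C_1\mb^{1/3})\le 3/C_1$;
\item a defective pair of non-defective blocks outside $\{u,v,w\}$ lands in $t$: at most $\nu^2_g/(C_1\mb^{1/3})^2\le 9/C_1^2$;
\item one of $u,v,w$ together with a block $j$ forms a defective pair: at most $3d^{(1,1)}_g/(C_1\mb^{1/3})\le 54/C_1$;
\item two of $u,v,w$ together with a block $j$ form a defective triple: at most $3d^{(2,1)}_g/(C_1\mb^{1/3})\le 54/C_1$;
\item one of $u,v,w$ with two blocks $j_1,j_2$: at most $3d^{(1,2)}_g/(C_1\mb^{1/3})^2\le 27/C_1^2$;
\item three outside blocks spanning a hyperedge.
\end{itemize}
The cases excluded from the definitions of the $d$'s (a defective sub-pair or sub-block) are already covered by earlier cases. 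A defective pair inside $\{u,v,w\}$ is impossible because the triple is non-defective.

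\textbf{Main obstacle: the last class.} Bounding it by $|E_g|/(C_1\mb^{1/3})^3$ with $|E_g|\le\rEm$ fails for $\theta\le 2/3$, since $\rEm/\mb$ carries a $\log^2\mb$ factor. Instead I would count hyperedges directly. Each hyperedge determines at most one triple of distinct blocks, so the number of such block triples is at most $M\le(1+\epsilon_n)\mb$ by condition (i). This gives at most $(1+\epsilon_n)/C_1^3$.

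\textbf{Conclusion.} Summing the classes, the test is positive with conditional probability at most $\delta:=111/C_1+36/C_1^2+2/C_1^3<0.73$ for $C_1\ge155$. Hence $p\ge(1-\delta)/(C_1^2\mb^{2/3})$. With $C_2=C_1^3$ this yields $\PP_\ell[\Ec_{uvw}]\le e^{-C_1(1-\delta)}\le e^{-40}$. Therefore
\[
\EE_\ell\Big(\sum_{u,v,w}\rE_{uvw}\Big)\le 168\,\rEm\, e^{-C_1(1-\delta)}\le \frac{\rEm}{2},
\]
since only $e^{-C_1(1-\delta)}\le 1/336$ is needed.
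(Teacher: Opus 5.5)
Your proposal is correct and is essentially the paper's proof: the same per-triple survival bound built from the co-location probability $(C_1\mb^{1/3})^{-2}$, the same union bound over false-positive sources via $\nu^1_g,\nu^2_g,d^{(1,1)}_g,d^{(2,1)}_g,d^{(1,2)}_g$, and the same use of $M\le(1+\epsilon_n)\mb$ (rather than $|E_g|$) for hyperedges spanning three outside blocks. The only difference is bookkeeping. You keep the $C_1^{-2}$ and $C_1^{-3}$ terms separate and obtain the exponent $C_1(1-\delta)\ge 40$, whereas the paper lumps everything into $149/C_1$ and gets $e^{-(C_1-149)}\le e^{-6}$; either bound suffices for the required $1/336$.
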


\begin{lemma}
\label{lem:variance}
Let $H\in\Tc(\epsilon_n)$ be an arbitrary typical hypergraph. Conditioned on the $\ell$-th level having $|\mathcal{PD}^{(\ell)}| \leq 168 \rEm$, where $\rEm$ is given by Eq.~\eqref{eq:definition-typical-set}. Then, for any
$C_1\ge155$ and $C_2=C_1^3$
\begin{align*}
   \Var_{\ell} \!\left(\sum_{u,v,w} \rE_{uvw} \right) \leq \bigO \!\left( \frac{\rEm^2}{\mb^{1/3}} \right),
\end{align*} 
where the sum is taken over all non-defective triples in $\mathcal{PD}^{(\ell)}$.
\end{lemma}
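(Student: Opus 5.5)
The plan is to expand the variance into diagonal and off-diagonal terms and bound each, working conditionally on all test placements at earlier levels, so that $\mathcal{PD}^{(\ell)}$ is a fixed set with $|\mathcal{PD}^{(\ell)}|\le 168\rEm$. Write
\[
\Var_\ell\Bigl(\sum \rE_{uvw}\Bigr)=\sum_{(u,v,w)}\Var_\ell(\rE_{uvw})+\sum_{(u,v,w)\neq(u',v',w')}\Cov_\ell(\rE_{uvw},\rE_{u'v'w'}).
\]
The diagonal part is at most $\sum \EE_\ell[\rE_{uvw}]\le |\mathcal{PD}^{(\ell)}|\le 168\rEm$. Since $\rEm\ge 12\mb$ in both regimes, this is $\bigO(\rEm^2/\mb^{1/3})$. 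All the work is therefore in the covariances.

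For the off-diagonal terms I would use that the $C_2\mb^{2/3}$ iterations are independent. Then $\rE_{uvw}=\prod_{r}(1-Z^{(r)}_{uvw})$, where $Z^{(r)}_{uvw}$ indicates that in iteration $r$ the three blocks share a test, i.e.\ $h(u)=h(v)=h(w)$, and that test is negative. Hence
\[
\EE_\ell[\rE_{uvw}\rE_{u'v'w'}]=(1-p-p'+p_{12})^{C_2\mb^{2/3}},\qquad \EE_\ell[\rE_{uvw}]\,\EE_\ell[\rE_{u'v'w'}]=(1-p)^{C_2\mb^{2/3}}(1-p')^{C_2\mb^{2/3}},
\]
where $p,p'$ are the single-iteration elimination probabilities. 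These satisfy $p\asymp (C_1\mb^{1/3})^{-2}$, since the negativity of the common test has constant probability, which is the computation from Lemma~\ref{lem:expectation}. Here $p_{12}$ is the probability that both triples are eliminated in the same iteration. The covariance is then governed by $p_{12}-pp'$, and I would classify pairs of triples by the number of blocks they share.

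\emph{Zero or one shared block.} The joint co-location probability is $(C_1\mb^{1/3})^{-4}$, exactly the product of the marginals. The only deviation from independence comes from the negativity events, which interact with each other and with placements of defective blocks, for instance when both triples land in the same test. That event costs an extra factor $(C_1\mb^{1/3})^{-1}$. So I expect $|p_{12}-pp'|=\bigO(\mb^{-5/3})$. Using $(1+x)^N\le e^{Nx}$ with $N=C_2\mb^{2/3}$, the covariance is then $\bigO(\mb^{-1/3})\cdot\EE_\ell[\rE_{uvw}]\EE_\ell[\rE_{u'v'w'}]$. Summing over at most $|\mathcal{PD}^{(\ell)}|^2=\bigO(\rEm^2)$ pairs gives $\bigO(\rEm^2/\mb^{1/3})$.

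\emph{Two shared blocks.} Here co-location has probability $(C_1\mb^{1/3})^{-3}\gg pp'$, so there is genuine correlation. I would simply bound the covariance by $\EE_\ell[\rE_{uvw}]\le 1$ and count such pairs. The needed count is that each pair of blocks lies in $\bigO(\mb^{1/3})$ triples of $\mathcal{PD}^{(\ell)}$, up to logarithmic slack absorbed into $\rEm$. That gives $\bigO(|\mathcal{PD}^{(\ell)}|\,\mb^{1/3})=\bigO(\rEm\,\mb^{1/3})\le\bigO(\rEm^2/\mb^{1/3})$, because $\rEm\ge\mb^{2/3}$.

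The main obstacle is this combinatorial count of how many candidate triples in $\mathcal{PD}^{(\ell)}$ can share a fixed pair of blocks. I would first try to bound it using the typical-set degree bounds. For a non-defective pair, $d^{(2,1)}_g\le 18\mb^{1/3}$ controls third blocks forming genuine defective triples with it. For pairs involving defective blocks or defective pairs, $\nu^1_g,\nu^2_g,d^{(1,1)}_g$ give the corresponding counts. Non-defective surviving triples at the previous level contribute only a bounded number (at most $84$ children) each. If a deterministic per-pair bound fails, the fallback is to sum these covariances in expectation using Lemma~\ref{lem:expectation}: surviving non-defective parents are few, at most $\rEm/2$ in expectation, and each parent generates only $\bigO(1)$ children sharing a given pair. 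Either route should yield $\Var_\ell\bigl(\sum\rE_{uvw}\bigr)=\bigO(\rEm^2/\mb^{1/3})$.
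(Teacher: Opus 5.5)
There is a genuine gap in your two-shared-blocks case. You bound each such covariance by $1$, so you need the number of pairs of non-defective triples in $\mathcal{PD}^{(\ell)}$ that share two blocks to be $\bigO(\rEm^2/\mb^{1/3})$. To get this you claim that each pair of blocks lies in $\bigO(\mb^{1/3})$ candidate triples. That claim is false, and the count you need fails as well.

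Defective triples are never eliminated. An easy induction then shows that if $(\alpha,\beta)$ is a defective pair at level $\ell-1$, every triple $(\alpha,\beta,z)$ lies in $\mathcal{PD}^{(\ell-1)}$ and survives. Hence every triple $(a,b,c)$ with $a,b$ children of $\alpha$ or $\beta$ and $c$ arbitrary belongs to $\mathcal{PD}^{(\ell)}$. For a non-defective child pair $(a,b)$, all but $\bigO(\mb^{1/3})$ of these $g-2$ triples are non-defective, by the typical-set bounds on $\nu^1_g$, $d^{(1,1)}_g$ and $d^{(2,1)}_g$.

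For $\theta>1/3$, with high probability $H$ is typical and has $\Theta(\mb/n)$ defective pairs at level $n/9$. So at $\ell=\log_3 n-1$ there are $\Theta(\mb n)$ pairs of non-defective candidate triples sharing two blocks. For $1/3<\theta<1/2$ this is much larger than $\rEm^2/\mb^{1/3}=\Theta(\mb^{5/3}\log^4\mb)$. Your fallback through Lemma~\ref{lem:expectation} does not help: these heavy pairs come from defective parents, which always survive, not from surviving non-defective parents.

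The missing observation is that this covariance is itself small, so no counting is needed. Set $B=C_1\mb^{1/3}$ and $R=C_2\mb^{2/3}$. Both triples can be eliminated in the same iteration only if all four blocks share a test, which has probability $B^{-3}$. Let $x$ be the per-iteration probability that neither triple is eliminated, and $y$ the product of the two marginal non-elimination probabilities. Then $|x-y|\le B^{-3}$, and the covariance equals $x^R-y^R$. Using $|x^R-y^R|\le R|x-y|$ with $C_2=C_1^3$ gives $|\Cov|\le C_2\mb^{2/3}/(C_1^3\mb)=\mb^{-1/3}$ per pair. Summing over at most $(168\rEm)^2$ pairs gives $\bigO(\rEm^2/\mb^{1/3})$ directly. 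The paper handles all three overlap cases this way.

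You should also use the trivial bound $|p_{12}-pp'|\le B^{-4}$ when the triples share one block; it gives covariance $\bigO(\mb^{-2/3})$. Your claimed $\bigO(\mb^{-5/3})$ is wrong there. Joint elimination requires all five blocks in one test, so $p_{12}$ is $B^{-4}$ times the probability that this single test is negative. Meanwhile $pp'$ is $B^{-4}$ times a product of two such probabilities, and these generally differ by a constant. This slip is harmless for the final bound.
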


The proofs for Lemma \ref{lem:expectation} and Lemma \ref{lem:variance} can be found in the Appendix \ref{section:analysis_of_level_hypergraph}.


\begin{lemma}
\label{lem:bounds_size_PD}
Let $H\in\Tc(\epsilon_n)$ be an arbitrary typical hypergraph. Then, for any
$C_1\ge155$ and $C_2=C_1^3$, then
\begin{align*}
    \PP\left(
        \max_{\ell_{\min}\leq\ell\leq\log_3 n}
        |\mathcal{PD}^{(\ell)}|
        \leq168\rEm
    \right)
    \geq
    1-\bigO\left(\frac{\log_3 n}{\mb^{1/3}}\right).
\end{align*}
In other words, the probability that $|\mathcal{PD}^{(\ell)}| \leq 168\rEm$ holds simultaneously for every level $\ell = \ell_{\min}, \ell_{\min}+1,\dots,\log_3{n}$ is at least $1-\bigO\left(\dfrac{\log_3 n}{\mb^{1/3}}\right)$.
\end{lemma}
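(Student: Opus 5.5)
The plan is an induction over levels driven by Chebyshev's inequality, with a union bound across the $\bigO(\log_3 n)$ levels. Let $A_\ell$ be the event $|\mathcal{PD}^{(\ell)}|\le 168\rEm$.

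\textbf{Base case.} At $\ell=\ell_{\min}$ we have $g=3^{\ell_{\min}}$. This lies between $\mb^{1/3}$ and $3\mb^{1/3}$. Hence $|\mathcal{PD}^{(\ell_{\min})}|=\binom{g}{3}\le \frac{27\mb}{6}\le 168\rEm$, using $\rEm\ge 12\mb$ or $\rEm\ge 2\mb\log^2\mb$. So $A_{\ell_{\min}}$ holds deterministically.

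\textbf{Inductive step.} Condition on the placements at levels up to $\ell-1$ and on $A_\ell$. The set $\mathcal{PD}^{(\ell+1)}$ consists of the $84$ children of each triple surviving at level $\ell$. A survivor is either defective, i.e.\ its union contains a hyperedge, or a non-defective triple that is not eliminated. Defective triples in $\mathcal{PD}^{(\ell)}$ are triples of blocks forming a hyperedge of the level hypergraph $\Hc_g$. Since $\mathcal{PD}^{(\ell)}$ is a set of unordered block triples, there are at most $|E_g|\le\rEm$ of them, by typicality (ii).

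This gives
\[
|\mathcal{PD}^{(\ell+1)}|\le 84\Bigl(\rEm+\sum_{u,v,w}\rE_{uvw}\Bigr).
\]
It therefore suffices that $\sum\rE_{uvw}\le\rEm$. By Lemma~\ref{lem:expectation}, $\EE_\ell\sum\rE_{uvw}\le\rEm/2$. Chebyshev's inequality with Lemma~\ref{lem:variance} then gives
\[
\PP_\ell\Bigl(\sum\rE_{uvw}>\rEm\Bigr)\le\PP_\ell\Bigl(\bigl|\textstyle\sum\rE_{uvw}-\EE_\ell\sum\rE_{uvw}\bigr|\ge \rEm/2\Bigr)\le \frac{4\Var_\ell}{\rEm^2}=\bigO(\mb^{-1/3}).
\]
So $\PP(A_{\ell+1}^c\mid A_\ell,\text{history})=\bigO(\mb^{-1/3})$, which yields $|\mathcal{PD}^{(\ell+1)}|\le 168\rEm$ as required.

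\textbf{Chaining.} Write
\[
\PP\Bigl(\bigcap_\ell A_\ell\Bigr)\ge 1-\sum_{\ell}\PP\bigl(A_{\ell+1}^c\cap A_\ell\bigr).
\]
Each summand is bounded by $\bigO(\mb^{-1/3})$ by taking expectations of the conditional bound over the history, restricted to $A_\ell$. There are at most $\log_3 n$ levels, giving $1-\bigO(\log_3 n/\mb^{1/3})$. Independence of test placements across levels is what makes conditioning on earlier levels legitimate: $\mathcal{PD}^{(\ell)}$ is determined by earlier placements, while the level-$\ell$ placements are fresh.

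\textbf{Main obstacle.} The delicate point is the constant bookkeeping: the defective triples must be counted without overcounting, so that the $84(\rEm+\rEm)=168\rEm$ bound closes. If ordered triples or triples covering a hyperedge via padding blocks (hyperedges within one or two blocks) exceed $|E_g|$, the defective count needs the $\nu^1_g,\nu^2_g,d$ parameters. In that case I would absorb them, trusting that $E_g$ is defined exactly as defective block triples, which matches the padded children.
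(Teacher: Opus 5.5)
Your proposal is correct and follows essentially the same route as the paper. Both proofs use a deterministic base case from $\binom{3^{\ell_{\min}}}{3}=\bigO(\mb)$, then a per-level step that bounds defective survivors by $|E_g|\le\rEm$ and non-defective survivors by Chebyshev via Lemmas~\ref{lem:expectation} and~\ref{lem:variance}, giving $84\rEm+84\rEm$, and finally a union bound over the $\bigO(\log_3 n)$ levels (your first-failure decomposition only makes the paper's conditioning explicit). The concern in your last paragraph resolves as you expect: $E_g$ is defined as the set of all defective block triples, including those whose hyperedge lies inside one or two blocks, and $\mathcal{PD}^{(\ell)}$ is a set of unordered triples.
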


\begin{proof}
At the initial level $\ell_{\min}$, since $\rEm \geq 2\mb$, it follows that the size of the possibly defective set is at most
\begin{align*}
    |\mathcal{PD}^{(\ell_{\min})}| = \displaystyle\binom{3^{\ell_{\min}}}{3} \leq 3^{3\ell_{\min}} = 3^{3\lceil\log_3\mb^{1/3}\rceil} \leq 3^{3(\log_3{\mb^{1/3}}+1)} = 27\mb < 168\rEm
\end{align*}

We now prove the claim by induction. In particular, conditioned on $|\mathcal{PD}^{\ell}| \leq 168\rEm$, among the PD triples at the \( \ell \)-th level, at most \( \rEm \) are defective triples, which generate at most \( 84\rEm \) children at the next level.  
Denote $\mu = \EE_{\ell}\!\left(\displaystyle\sum_{u,v,w} \rE_{uvw} \right)$ and $\sigma^2 = \Var_{\ell} \!\left(\displaystyle\sum_{u,v,w} \rE_{uvw} \right)$ (Note that the sum is over all non-defective triples in $\mathcal{PD}^{(\ell)}$). By Lemma~\ref{lem:expectation} and Lemma~\ref{lem:variance}, we have
\begin{align*}
    \mu \leq \dfrac{\rEm}{2}, \quad \sigma^2 \leq \bigO\left( \frac{\rEm^2}{\mb^{1/3}} \right)
\end{align*}
Applying Chebyshev's inequality \eqref{eq:chebyshev}, we obtain
\begin{align*}
    \PP\left( \sum_{u,v,w} \rE_{uvw} > \rEm \right) \leq \PP\left( \left|\sum_{u,v,w} \rE_{uvw} - \mu\right| > \dfrac{\rEm}{2} \right) \leq \dfrac{\bigO\left( {\rEm^2}/{\mb^{1/3}} \right)}{\rEm^2/4} = \bigO\left(\dfrac{1}{\mb^{1/3}}\right)
\end{align*}

Therefore, with probability at least \( 1 - \bigO \left( \dfrac{1}{\mb^{1/3}}\right) \), at most \( \rEm \) non-defective triples are incorrectly retained as PD. These contribute at most another \( 84\rEm \) children at the next level, leading to a total of at most \( 84\rEm + 84\rEm = 168\rEm \) PD triples. As a result, we can obtain
\begin{align*}
    \PP\left(|\mathcal{PD}^{\ell+1}| \leq 168\rEm \middle| |\mathcal{PD}^{\ell}| \leq 168\rEm\right) \geq 1 - \bigO \left( \dfrac{1}{\mb^{1/3}}\right)
\end{align*}

Repeating the above argument inductively over the levels, the probability that the bound first fails at any given level is at most $\bigO(\mb^{-1/3})$.
Taking a union bound over the at most \( \log_3 n \) levels, we obtain
\begin{align*}
    \PP\left(
        \max_{\ell_{\min}\leq\ell\leq\log_3 n}
        |\mathcal{PD}^{(\ell)}|
        \leq
        168\rEm
    \right)
    \geq
    1-\bigO\left(
        \frac{\log_3 n}{\mb^{1/3}}
    \right),
\end{align*}
which completes the proof.
\end{proof}

We now prove the main theorem.
\begin{proof}[Proof of Theorem~\ref{thm:main}] The stated claims are inferred as follows.
\begin{itemize}
	\item { \textbf{Number of tests:} At each intermediate level from $\left\lceil \log_3 \overline{m}^{1/3}\right\rceil$ to $\log_3 n-1$, we use $C_1C_2\overline{m}$ tests. At the final level, we use $C_1C_2C'\overline{m}\log n$ tests. Thus, the total number of tests is
\begin{align*}
    C_1C_2\overline{m}
\left(
\log_3 n-\left\lceil \log_3 \overline{m}^{1/3}\right\rceil
\right)
+
C_1C_2C'\overline{m}\log n = \bigO(\overline{m}\log n).
\end{align*}

	}
	\item{ \textbf{Decoding time:} 
		From Eq.~\eqref{eq:definition-typical-set} used in Lemma~\ref{lem:typical}, there are at most \( 12\mb \) defective triples at each level for $\theta>2/3$ and there are at most $2\mb\log^2{\mb}$ defective triples at each level for $\theta \leq 2/3$. 
        
		At each level \( \ell \), for each possible defective triple, we conduct at most \( C_2 \mb^{2/3} \) outcome checks. This gives a total of \( \bigO \big( \mb^{2/3} |\mathcal{PD}^{(\ell)}| \big) \) outcome checks at level \( \ell \). Therefore, the total number of outcome checks from levels \( \ceil{\log_3{\mb^{1/3}}} \) to \( \log_3 n - 1 \) is 
		\[
		\bigO \left( \mb^{2/3} \sum_{\ell = \ceil{\log_3{\mb^{1/3}}}}^{\log_3 n - 1} |\mathcal{PD}^{(\ell)}| \right).
		\]
		At the final level \( \ell  = \log_3 n \), we conduct \( \bigO \big( \mb^{2/3} \log n \cdot |\mathcal{PD}^{(\ell)}| \big) \) outcome checks.
		
		For the case $\theta > \dfrac{2}{3}$, as shown in Lemma \ref{lem:bounds_size_PD}, with probability at least \( 1 - \bigO \left( \dfrac{\log_3{n}}{\mb^{1/3}}\right) \), we have \( |\mathcal{PD}^{(\ell)}| \leq 168\rEm = 2016\mb\) for all \( \ell \in \{\ceil{\log_3 \mb^{1/3}}, \dots, \log_3 n\} \). Therefore, the total number of outcome checks in the decoding procedure is at most \( \bigO(\mb^{5/3}\log n) \), with probability at least \( 1 - \bigO \left( \dfrac{\log_3{n}}{\mb^{1/3}}\right) \). 

        For the case $\theta \leq \dfrac{2}{3}$, replacing \( |\mathcal{PD}^{(\ell)}| = \bigO(\mb) \) by \( |\mathcal{PD}^{(\ell)}| = \bigO(\mb\log^2\mb) \) and proceeding exactly as above, we conclude that the total number of outcome checks in the decoding procedure is \( \bigO(\mb^{5/3}\log^2\mb\log n) \) with probability at least \( 1 - \bigO\!\left(\dfrac{\log_3 n}{\mb^{1/3}}\right) \).
	}
	\item{ \textbf{Error probability:}
    By Lemma \ref{lem:false_negative}, every true hyperedge is preserved throughout the decoding procedure, and hence $E\subseteq \widehat{E}$ deterministically. Therefore, it remains only to show that, with probability (1-o(1)), every non-defective triple is eliminated at the final level. By the law of total probability, we have 
    \begin{equation}
\begin{aligned}
\PP(\widehat{E}=E)
&\ge
\PP(H\in\mathcal{T}(\epsilon_n))
\PP(\widehat{E}=E\mid H\in\mathcal{T}(\epsilon_n)) \\
&\ge
\PP(H\in\mathcal{T}(\epsilon_n))
\PP\!\left(
|\mathcal{PD}^{(\log_3 n)}|
\le 168\rEm
\,\middle|\,
H\in\mathcal{T}(\epsilon_n)
\right) \\
&\qquad\cdot
\PP\!\left(
\widehat{E}=E
\,\middle|\,
|\mathcal{PD}^{(\log_3 n)}|
\le 168\rEm,\,
H\in\mathcal{T}(\epsilon_n)
\right).
\end{aligned}
\end{equation}
    From the previous analysis in Lemma \ref{lem:bounds_size_PD}, we have established that
    \begin{align*}
        \PP\left(\displaystyle\max_{\ell_{\min}\leq \ell \leq \log_3{n}}|\mathcal{PD}^{(\ell)}| \leq 168\rEm \middle| H \in \mathcal{T}(\epsilon_n) \right) \geq 1 - \bigO \left( \dfrac{\log_3{n}}{\mb^{1/3}}\right)
        \label{eq:error}
    \end{align*}
    
    At the final level, we conduct $C'\log n$ independent rounds, each consisting of $C_1 C_2 \mb$ tests. We analyze the error probability under the high probability condition that the size of the  $\mathcal{PD}^{(\log_3 n)}$ is at most $\bigO(\mb \log^2 \mb )$.
		
		For any fixed non-defective triple $(x,y,z)$ and a given sequence of $C_1 C_2 \mb$ tests, the probability that the triple is not  identified is at most $\exp(-(C_1 - 149))$, following a similar argument as in Lemma~\ref{lem:expectation}. Since we do $C'\log n$ repetitions, the probability that $(x,y,z)$ remains undetected is
        \begin{align*}
            \PP((x,y,z) \text{ is unidentified at the final level}) \leq \exp(-C'\log{n}(C_1-149)) = \dfrac{1}{n^{C'(C_1-149)}}
        \end{align*}
        Thus, for any constant $c > 0$, there exists a  choice of $C'$ such that the probability of any specific non-defective triple is not identified is at most $\bigO(n^{-c})$. Applying a union bound over the $|\mathcal{PD}^{(\log_3 n)}|$ non-defective triples at the final level, we then obtain
        \begin{align}
            \PP(\widehat{E} = E \mid |\mathcal{PD}^{\log_3{n}}| \leq 168\rEm, H \in \mathcal{T}(\epsilon_n)) \geq 1 - \bigO\left(\dfrac{\rEm}{n^{c}}\right)
        \end{align}
        Note that $\rEm = 12\mb$ (if $\dfrac{2}{3} < \theta < 1$) or $\rEm = 2\mb\log^2{\mb}$ (if $0 <\theta \leq \dfrac{2}{3}$) and $\mb = \Theta(n^{3\theta})$. In either cases, one can see that $\rEm \ll n^4$ as $n \to \infty$. As we have stated earlier in Lemma \ref{lem:typical}, $\PP(H \in \mathcal{T}(\epsilon_n)) = 1 - o(1)$ as $n \to \infty$. Substituting into \eqref{eq:error} yields
        \begin{align*}
            \PP(\widehat{E}=E) \geq 1 - o(1) - \bigO \left( \dfrac{\log_3{n}}{\mb^{1/3}}\right) - \bigO\left(\dfrac{\rEm}{n^{c}}\right) = 1 - o(1), \quad \text{for any } c > 4
        \end{align*}
	}
\end{itemize}

\end{proof}

\section{Discussion}
\subsection{Obstacles to General $k$-Uniform Extensions}
Although the hierarchical splitting framework extends naturally to the \(k\)-uniform setting, the resulting decoding procedure incurs a severe combinatorial blow-up. In this section, we briefly analyze this naive extension and show that the number of child candidate tuples grows exponentially in \(k\), leading to at least exponential dependence of the running time on \(k\). This explains why our main results focus on the 3-uniform case.
\[
\ell=\left\lceil \log_k \mb^{1/k}\right\rceil,\,
\left\lceil \log_k \mb^{1/k}\right\rceil+1,\dots,\log_k n.
\]
At each level $\ell$, the vertices are divided into $k^\ell$ disjoint blocks $\left\{\Hc^{(\ell)}_1,\dots,\Hc^{(\ell)}_{k^\ell}\right\},$ where every block contains exactly $n/k^\ell$ vertices. Equivalently, these blocks form the nodes of a $k$-ary hierarchical refinement tree, with each block at level $\ell$ being subdivided into $k$ child blocks at level $\ell+1$.

The test design is randomized. In each repetition, every level-$\ell$ block is assigned independently and uniformly at random to one of $C_1\mb^{1/k}$ tests. This random assignment procedure is carried out for $C_2\mb^{1-1/k}$ independent repetitions, where $C_1,C_2>0$ are constants specified later. Hence, the total number of repetitions scales on the order of $\mb^{1-1/k}$.

\begin{algorithm}[H]
	\caption{Testing Procedure}
	\label{alg:testing_general}
	\begin{algorithmic}[1]
		\REQUIRE Number of nodes \( n \), expected number of hyperedges \( \mb \), constants $C_1,C_2,C'$.
		\STATE Set \( \ell_{\min} \gets \ceil{\log_k \mb^{1/k}} \)
		\FOR{each level \( \ell = \ell_{\min}, \dotsc, \log_k n - 1 \)}
		\FOR{each iteration in \( \{1, \dotsc, C_2 \mb^{1-1/k}\} \)} 
        \STATE Initialize a sequence of $C_1 \overline{m}^{1/k}$ tests
		\FOR{each block \( j = 1, \dotsc, k^{\ell} \)}
		\STATE Assign block \( \Hc^{(\ell)}_j \) to a randomly chosen test among the $C_1 \mb^{1/k}$ tests
		\ENDFOR
		\ENDFOR
		\ENDFOR
		\STATE At the final level $\ell = \log_k n$, each block reduces to a singleton, i.e., $\Hc_{j}^{(\ell)} = \{j\}$. Repeat steps 3--6 for $C' \log n$ rounds, with each round consisting of $C_1 C_2 \mb$ tests.
	\end{algorithmic}
\end{algorithm}

As for the decoding procedure, the process is the same as well. At each level $\ell$ we keep a set of \emph{possible defective} blocks $\mathcal{PD}^{(\ell)}$, whose elements are block tuples $(\Hc_{i_1},\dots,\Hc_{i_k})$ that may still contain a true hyperedge at the current resolution. If a tuple \((\Hc_{i_1},\dots,\Hc_{i_k})\in\mathcal{PD}^{(\ell)}\) is not eliminated, we refine it as follows. Each block is partitioned into \(k\) children:
\[
\Hc_{i_r}\to \{\Hc_{i_r}^{(1)},\dots,\Hc_{i_r}^{(k)}\},
\qquad r=1,\dots,k.
\]
Let $\mathcal{C}_{i_1,\dots,i_k}
:=
\bigcup_{r=1}^{k}
\{\Hc_{i_r}^{(1)},\dots,\Hc_{i_r}^{(k)}\}.$ We then add to $\mathcal{PD}^{(\ell+1)}$ all tuples consisting of $k$ distinct
elements chosen from $\mathcal{C}_{i_1,\dots,i_j,\dots,i_k}$, namely
\[
\binom{\mathcal{C}_{i_1,\dots,i_j,\dots,i_k}}{k} = \binom{k^2}{k}
\]

Thus, the procedure operates solely by either discarding candidate tuples or refining them into smaller subtuples. At the final level, where each block has become a singleton vertex, every surviving vertex tuple that is not excluded by any negative test is declared to be a hyperedge.

\begin{algorithm}[H]
	\caption{Decoding Procedure}
	\label{alg:decoding_general}
	\begin{algorithmic}[1]
		\REQUIRE Test outcomes \( \{Y^{(i)}\}_{i=1}^t \) from Algorithm~\ref{alg:testing_general}, number of nodes \( n \), expected number of hyperedges \( \mb \)

		\STATE Initialize candidate set
		\[
		\mathcal{PD}^{(\ell_{\min})}
		=
		\left\{
		(\Hc^{(\ell_{\min})}_{i_1},\Hc^{(\ell_{\min})}_{i_2},\dots,\Hc^{(\ell_{\min})}_{i_k})
		:\;
		1 \le i_1 < i_2 < \cdots < i_k \le k^{\ell_{\min}}
		\right\}
		\]

		\FOR{each level \( \ell = \ell_{\min}, \dotsc, \log_k n - 1 \)}
			\FOR{each tuple \( (\Hc_{i_1},\Hc_{i_2},\dots,\Hc_{i_k}) \in \mathcal{PD}^{(\ell)} \)}
				\IF{no negative test contains \( \Hc_{i_1}, \Hc_{i_2}, \dots, \Hc_{i_k} \)}
					\STATE Add all child tuples generated from \( (\Hc_{i_1},\Hc_{i_2},\dots,\Hc_{i_k}) \) to \( \mathcal{PD}^{(\ell+1)} \)
				\ENDIF
			\ENDFOR
		\ENDFOR

		\STATE Let \( \widehat{E} \) be the set of tuples in \( \mathcal{PD}^{(\log_k n)} \) that are not ruled out by any negative test at the final level

		\STATE Return hypergraph estimate \( \widehat{H} = (V, \widehat{E}) \)
	\end{algorithmic}
\end{algorithm}

An analogous argument similar to the proof of Theorem \ref{thm:main} shows that, for any fixed value of $k$, the testing and decoding procedures described in Algorithm~\ref{alg:testing_general} and Algorithm~\ref{alg:decoding_general} achieve a decoding time of $\mathcal{O}(m^{2-1/k}\log n)$ while using $\mathcal{O}(m\log n)$ tests. However, when $k$ is treated as a variable parameter, the complexity incurs an additional exponential dependence on $k$. 

In particular, for each level $\ell \in \{\lceil \log_k \mb^{1/k}\rceil, \lceil \log_k \mb^{1/k}\rceil+1, \dots, \log_k n\}$, we define the following parameters:
\begin{itemize}
    \item $E_g:$ The set of defective hyperedges in the level block hypergraph.
    \item $v_g^{(i)}$: the number of $i$-tuples of blocks that collectively contain the vertices of at least one hyperedge.
    
    \item \(d_g^{(x,y)}(\mathcal H_{i_1},\dots,\mathcal H_{i_x})\): for a fixed non-defective \(x\)-tuple of blocks \((\mathcal H_{i_1},\dots,\mathcal H_{i_x})\), this denotes the number of \(y\)-tuples of distinct blocks \((\mathcal H_{j_1},\dots,\mathcal H_{j_y})\) such that: (i) the union of these \(x+y\) blocks can contain a hyperedge; and (ii) no proper subcollection of fewer than \(x+y\) blocks containing \((\mathcal H_{i_1},\dots,\mathcal H_{i_x})\) already forms a defective configuration counted in a lower-order case. We further define
    \[
    d_g^{(x,y)} \coloneqq \max_{\substack{(\mathcal{H}_{i_1}, \dots, \mathcal{H}_{i_x}) \text{non-defective}}}
    d_g^{(x,y)}(\mathcal{H}_{i_1}, \dots, \mathcal{H}_{i_x}).
    \]
\end{itemize}

By extending the counting and concentration arguments in Lemma \ref{lem:typical}, with probability $1-o(1)$, then: 
\begin{align*}
    \begin{cases}
        v^{(i)}_g \leq 3\mb^{i/k} \cdot \exp{(2k)} \cdot i^{k-i} \\[1ex]
        d_g^{(x,y)} \leq 3\mb^{y/k} \cdot \exp{(k+y)} \cdot k^k \\[1ex]
        |E_g| \leq E_{max} \coloneqq 2\mb \cdot \exp{(2k)} \cdot \displaystyle\max_{1 \leq i \leq k-1}{i^{k-i}} 
    \end{cases}
\end{align*}
Applying the same arguments as in Lemma~\ref{lem:expectation}, Lemma~\ref{lem:variance}, and Lemma~\ref{lem:bounds_size_PD}, we conclude that there exist positive constants \(C_1,C_2,C_k\), depending only on \(k\), such that with probability \(1-o(1)\),
\[
|\mathcal{PD}^{(\ell)}| \le C_k \binom{k^2}{k} E_{\max}.
\]
From the binomial lower bound,
\[
\binom{k^2}{k} \ge k^k = \exp(k\ln k),
\]
it follows that the number of surviving candidate tuples grows at least exponentially in \(k\). Therefore, the direct extension incurs exponential dependence on \(k\), revealing a fundamental limitation of the approach.

\subsection{Possible Decoding Acceleration via Permutation Techniques}

A natural question is whether the decoding time can be further improved by incorporating a permutation-based amplification step, analogous to techniques used in related graph-learning frameworks.

In particular, suppose that for each induced subhypergraph there exists a suitable collection of random permutations such that at least one permutation yields a favorable block configuration for the hierarchical splitting decoder (that is, there exists no defective block and no defective pair of blocks in the subhypergraphs). Then, one would obtain a non-adaptive recovery algorithm using $O\!\left(m^{1+\frac{9\delta}{26}}\log n\right)$ tests and achieving decoding time $O\!\left(m^{1+\delta}\log n\right),$ for any fixed constant $\delta>0$. The mildly superlinear increase in the number of tests arises from the need to perform a sufficiently large collection of tests across multiple candidate permutations, in order to identify one that eliminates all defective pairs of blocks. 

Nevertheless, such an improvement in decoding time can only be obtained under the assumption that there exists a family of \(\varepsilon\)-almost \(3\)-wise independent permutations with error parameter \(\varepsilon=\bigO(N^{-3})\), while still supporting forward and inverse evaluation in near-constant time. Achieving these properties simultaneously appears highly nontrivial. This also illustrates the difficulty of extending the permutation technique beyond the graph setting to \(3\)-uniform or higher-uniform hypergraphs, as it heavily relies on a yet-to-be-constructed permutation family that simultaneously provides strong pseudorandomness and efficient evaluation.
\section{Conclusion}
We extend the fast binary splitting framework for graph learning to the non-adaptive setting of Erd\H{o}s--R\'enyi 3-uniform hypergraphs. Our approach achieves a test complexity of $\mathcal{O}(\mb \log n)$ and a decoding time of $\mathcal{O}(\mb^{5/3}\log^2{\mb}\log n)$.

Several directions for future work remain open. While we provide a partial extension to $k$-uniform hypergraphs, the resulting complexity exhibits exponential dependence on $k$. Designing algorithms that achieve efficient scaling for growing $k$ while retaining the optimal order in $n$ and $\mb$  remains an open problem. Another promising direction is to extend the framework to noisy settings, where queries may return erroneous responses.

\subsection*{Acknowledgement}
We would like to thank Jonathan Scarlett for useful discussions about hypergraph learning, as well as the binary splitting method.

\bibliographystyle{alpha}
\bibliography{bibliofile}

\newpage
\appendix

\section{Some probabilistic inequalities}

In this section, we present several probabilistic inequalities that will be used throughout the paper.

\paragraph{Markov's inequality.}
Let $X$ be a non-negative random variable and let $t>0$. Then
\begin{align}
    \PP(X \ge t) \le \frac{\EE[X]}{t}.
    \label{eq:markov}
\end{align}

\paragraph{Chebyshev's inequality.}
Let $X$ be a random variable with finite mean $\mu = \EE[X]$ and variance $\sigma^2 = \Var(X)$. 
For any $t>0$, we have
\begin{align}
    \PP(|X - \mu| \ge t) \le \frac{\Var(X)}{t^2}.
    \label{eq:chebyshev}
\end{align}
\subsection{Concentration inequalities}
\label{appendix:concentration}

Let $X_1, X_2, \dots, X_n$ be a sequence of independent $\mathrm{Bernoulli}(q)$ random variables. 
Denote $X = \sum_{i=1}^{n} X_i$ and $\mu \coloneqq qn$. 
We then have the following:
\begin{enumerate}
    \item For $\delta > 0$, 
    \begin{equation}
      \PP \!\left(X \ge (1+\delta)\mu\right) 
      \leq \exp \!\left(-\frac{\delta^2 \mu}{2+\delta} \right).
      \label{eq:chernoff1}
    \end{equation}
    \item For any $t \geq \mu$, 
    \begin{equation}
      \PP(X \ge t) 
      \leq \left( \frac{e\mu}{t} \right)^{t}.
      \label{eq:chernoff2}
    \end{equation}
\end{enumerate}

\section{Proof of Lemma~\ref{lem:typical} (Typical Set Probability)}
\label{sec:appendix_lemm_typical}
 
As for condition (i), since $M$ follows a binomial distribution with mean $\mb = q\cdot\binom{n}{3}$; thus, it holds by a standard concentration inequalities argument.

We now turn to condition (ii). For each level $\ell$, let $g = 3^\ell$ and consider the quantities $\nu^1_g, \nu^2_g, d^{(1,1)}_g, d^{(1,2)}_g, d^{(2,1)}_g$. These are random variables determined by $G$, and we prove that they satisfy the bounds in \ref{lem:typical} with probability at least $1 - o(1)$ as $n \to +\infty$. Let $s = n/g$ denote the block size. Before dividing into the proof, we first note that 
\begin{align*}
    \mb^{1/3} \leq g \leq n
\end{align*}

\subsection*{Bounding $\nu_g^1$}

For each block $\Hc_u$, in which $u \in [g]$, define $Z_u$ to be the Bernoulli random variable:
\begin{align*}
    Z_u = \begin{cases}
        1 & \exists\, e \in E(H): e \subseteq \Hc_u \\
        0 & \text{otherwise}
    \end{cases}
\end{align*}
Since we have defined $\nu^1_g$ as the number of defective blocks, it follows that
\begin{align*}
    \nu^1_g = \displaystyle\sum_{u \in [g]}{Z_u}
\end{align*}

Each block $\Hc_u$ contains $|\Hc_u| = \dfrac{n}{g}$ vertices; hence, the number of possible hyperedges inside $\Hc_u$ is $\displaystyle\binom{n/g}{3}$. Therefore, we have:
\begin{align*}
    \mathbb{P}(Z_{u}=1) = 1 - (1-q)^{\binom{n/g}{3}}
\end{align*}
Using the identity $1 - (1-q)^T \leq qT$, we can obtain
\begin{align*}
\mathbb{P}(Z_u=1) = 1 - (1-q)^{\binom{n/g}{3}} \leq q\binom{n/g}{3}
\end{align*}
Combining with the fact that $\mb = q \cdot \binom{n}{3}$, we can obtain:
\begin{align*}
    \mathbb{P}(Z_u = 1) \leq \overline{m} \cdot \dfrac{\binom{n/g}{3}}{\binom{n}{3}} = \dfrac{\overline{m}}{g^3}\cdot\dfrac{n-g}{n-1}\cdot\dfrac{n-2g}{n-2}
\end{align*}
Since $g \geq 1$, we have $\dfrac{n-g}{n-1} \leq 1$ and $\dfrac{n-2g}{n-2} \leq 1$, which then implies
\begin{align*}
    \mathbb{P}(Z_u = 1) \leq \dfrac{\overline{m}}{g^3}\cdot\dfrac{n-g}{n-1}\cdot\dfrac{n-2g}{n-2} \leq \dfrac{\overline{m}}{g^3}
\end{align*}
By linearity of expectation and the fact that $\overline{m}^{\textstyle\frac{1}{3}}\leq g \leq n$, one obtains:
\begin{align*}
     \mu_{\nu^1_g} = \mathbb{E}(\nu^1_g) = \mathbb{E}\left(\displaystyle\sum_{u \in [g]}{Z_u}\right) \leq \dfrac{\overline{m}}{g^2} \leq \overline{m}^{\textstyle\frac{1}{3}}
\end{align*}

Since the blocks are pairwise disjoint and hyperedges are generated independently, the random variables $\{Z_u\}_{u \in [g]}$ are independent. We now consider two cases: 
\begin{itemize}
    \item \textbf{Case 1:} $\theta > \dfrac{2}{3}$. Since $\overline{m} = \Theta(n^{3\theta})$ and $\overline{m}^{1/3} \leq g \leq n$, there exists a positive constant $c_1>0$ such that for all sufficiently large $n$, we have
    \begin{align*}
        \dfrac{\mb}{g^2} \in \left[c_1n^{3\theta-2},\mb^{1/3}\right]
    \end{align*}
    Applying the Chernoff inequality \eqref{eq:chernoff2} yields
    \begin{align}
\mathbb{P}\left(\nu^1_g > \dfrac{3\mb}{g^2}\right)  \leq \left(\dfrac{e\mu_{\nu^1_g}}{3\mb/g^2}\right)^{3\mb/g^2} \leq \left(\dfrac{e}{3}\right)^{3\mb/g^2} = \exp{\left(-\Omega(n^{3\theta-2})\right)}
\label{bound-v_1-1}
\end{align}
    \item \textbf{Case 2:} $\theta \leq \dfrac{2}{3}$. From the Chernoff inequality \eqref{eq:chernoff2}, it is immediate that:
\begin{align}
\mathbb{P}\left(\nu^1_g > 3\overline{m}^{1/3}\right)  \leq \left(\dfrac{e\mu_{\nu^1_g}}{3\overline{m}^{1/3}}\right)^{3\overline{m}^{1/3}} \leq \left(\dfrac{e}{3}\right)^{3\overline{m}^{1/3}} =  \exp{\left(-\Theta(\mb^{1/3})\right)} = \exp{(-\Theta(n^{\theta}))}
\label{bound-v_1-2}
\end{align}
\end{itemize}

Since both bounds in \eqref{bound-v_1-1} and \eqref{bound-v_1-2} decay to zero (significantly) faster than $\dfrac{1}{\log_3{n}}$, meaning that they are still $o(1)$ after a union bound over at most $\log_3{n}$ values of $\ell$. Thus, with probability $1-o(1)$, \eqref{eq:v1} holds for all $\ell$.

\subsection*{Bounding $\nu^2_g$}
Let $T$ denote the set of hyperedges $e = \{a,b,c\}$ for which there exist distinct indices
$x,y \in [g]$, with $x < y$, such that
\[
\{a,b,c\} \subseteq \mathcal{H}_x \cup \mathcal{H}_y
\quad \text{and} \quad
\{a,b,c\} \nsubseteq \mathcal{H}_x,\;
\{a,b,c\} \nsubseteq \mathcal{H}_y .
\]
For each hyperedge $e \in T$, define the mapping
\[
\phi(e) := \{x,y\},
\]

where $\{x,y\}$ is the (unordered) pair of block indices such that $e$ intersects both
$\mathcal{H}_x$ and $\mathcal{H}_y$. This mapping is well defined, since the blocks form a partition of $V$, every vertex belongs to a unique block. Hence the unordered set of block indices intersected by a hyperedge is uniquely determined. Recall that $\nu_g^{(2)}$ is the number of defective block pairs whose individual
blocks are both non-defective. For any defective block pair $(\mathcal{H}_x,\mathcal{H}_y)$ with both blocks
non-defective, there exists at least one hyperedge $e \in T$ such that $\phi(e)=\{x,y\}$. Hence, every such defective block pair lies in the image of $\phi$, and we have
\[
\nu_g^{2} \le |\phi(T)| \leq |T|
\]

We now bound $|T|$. For a fixed block pair $\{x,y\}$ with $x \neq y$, let $s = n/g$ denote the block size.
The number of vertex triples contained in $\mathcal{H}_x \cup \mathcal{H}_y$ but not entirely within a
single block is
\[
\binom{2s}{3} - 2\binom{s}{3} = \dfrac{2s(2s-1)(2s-2)}{6} - \dfrac{2s(s-1)(s-2)}{6} = s^2(s-1)
\]
Let $S$ denote the set of all vertex triples $\{a,b,c\}$ such that
\[
\{a,b,c\} \subseteq \mathcal{H}_x \cup \mathcal{H}_y
\quad \text{and} \quad
\{a,b,c\} \nsubseteq \mathcal{H}_x,\;
\{a,b,c\} \nsubseteq \mathcal{H}_y
\]
for some distinct indices $x,y \in [g]$. Since there are $\binom{g}{2}$ choices of block pairs, we have
\[
|S| = \binom{g}{2}s^2(s-1).
\]
For each triple $\{a,b,c\} \in S$, define the Bernoulli random variable
\[
X_{a,b,c} =
\begin{cases}
1, & \text{if } \{a,b,c\} \in E(H), \\
0, & \text{otherwise}.
\end{cases}
\]
Under the Erd\H{o}s--R\'enyi hypergraph model, the random variables
$\{X_{a,b,c}\}_{\{a,b,c\}\in S}$ are independent and satisfy
\[
\mathbb{P}(X_{a,b,c} = 1) = q.
\]
Recalling that $T$ is precisely the set of hyperedges crossing exactly two blocks, we have
\[
|T| = \sum_{\{a,b,c\}\in S} X_{a,b,c}
\sim \mathrm{Bin}(|S|, q).
\]
By linearity of expectation,
\begin{align*}
\mu_T & = \mathbb{E}(|T|)
= q \binom{g}{2}s^2(s-1) 
\le \frac{g^2}{2} \cdot q \cdot s^2(s-1)
= \frac{g^2}{2} \cdot \frac{\overline{m}}{\binom{n}{3}} \cdot s^2(s-1) \\
& = \dfrac{g^2}{2} \dfrac{6\mb}{n(n-1)(n-2)}\dfrac{n^2(n-g)}{g^3}
= \dfrac{3\mb}{g} \dfrac{n(n-g)}{(n-1)(n-2)}
\end{align*}

For sufficiently large $n$, we have $g \ge \mb^{1/3} \geq 3$. As a result, $(n-1)(n-2) = n^2 - 3n + 2 \ge n^2 - gn = n(n-g)$ which then implies that
\begin{align*}
\mu_T \leq \dfrac{3\mb}{g} \leq 3\mb^{2/3}
\end{align*}
We now consider two cases:
\begin{itemize}
    \item \textbf{Case 1:} $\theta > \dfrac{2}{3}$. Since $\overline{m} = \Theta(n^{3\theta})$ and $\overline{m}^{1/3} \leq g \leq n$, there exists a constant $c_1>0$ such that for all sufficiently large $n$, we have
    \begin{align*}
        \dfrac{\mb}{g} \in \left[c_1n^{3\theta-1},\mb^{2/3}\right]
    \end{align*}
    Applying a Chernoff bound in \eqref{eq:chernoff2}, one can see that
\begin{align}
    \mathbb{P}\!\left(\nu_g^{2} > \frac{9\overline{m}}{g}\right)
\le
\mathbb{P}\!\left(|T| > \frac{9\overline{m}}{g}\right)
\le \left(\dfrac{e\mu_T}{9\mb/g}\right)^{9\mb/g}
\le
\left(\frac{e}{3}\right)^{9\mb/g} = \exp{\left(-\Omega(n^{3\theta-1})\right)}
\label{eq:bound-v-2-1}
\end{align}
    \item \textbf{Case 2:} $\theta \leq \dfrac{2}{3}$. Applying a Chernoff bound in \eqref{eq:chernoff2}, we obtain
\begin{align}
    \mathbb{P}\!\left(\nu_g^{2} > 9\overline{m}^{2/3}\right)
\le
\mathbb{P}\!\left(|T| > 9\overline{m}^{2/3}\right)
\le
\left(\dfrac{e\mu_T}{9\mb^{2/3}}\right)^{9\overline{m}^{2/3}}
\le
\left(\frac{e}{3}\right)^{9\overline{m}^{2/3}} = \exp{\left(-\Theta(n^{2\theta})\right)}
\label{eq:bound-v-2-2}
\end{align}
\end{itemize}

Since the bounds in \eqref{eq:bound-v-2-1} and \eqref{eq:bound-v-2-2} approach 0 strictly faster than $\dfrac{1}{\log_3{n}}$ when $n \to +\infty$, we may take a further union bound over all levels $\ell$ (with at most $\log_3{n}$ levels). It follows that, with probability $1 - o(1)$, \eqref{eq:v2} holds for all $\ell$.

\subsection*{Bounding $|E_g|$}
From the definition of $\Hc_g$ and the fact that $\Hc$ contains at most $M$ edges, we have
\begin{align*}
    |E_g| \leq \binom{g-1}{2} \cdot \nu^1_g + (g-2) \cdot \nu^2_g + M \leq \dfrac{g^2}{2}\cdot \nu^1_g + g \cdot \nu^2_g + M
\end{align*}
We now consider two cases:
\begin{itemize}
    \item \textbf{Case 1:} $\theta > \dfrac{2}{3}$. 
    If $\nu^1_g \le \dfrac{3\mb}{g^2}$, $\nu^2_g \le \dfrac{9\mb}{g}$ and $M \le \dfrac{3\mb}{2}$, then we have $|E_g| \le 12\mb$, which implies that
    \begin{align*}
        \PP\left(\left(\nu^1_g \le \dfrac{3\mb}{g^2}\right) \cap \left(\nu^2_g \le \dfrac{9\mb}{g}\right) \cap \left(M \le \dfrac{3\mb}{2}\right)\right) \leq \PP(|E_g| \le 12\mb) \\ 
        1 - \PP(|E_g| \le 12\mb)  \leq 1 - \PP\left(\left(\nu^1_g \le \dfrac{3\mb}{g^2}\right) \cap \left(\nu^2_g \le \dfrac{9\mb}{g}\right) \cap \left(M \le \dfrac{3\mb}{2}\right)\right)
    \end{align*}
    Which is equivalent to 
    \begin{align*}
        \PP(|E_g| > 12\mb) \leq \PP\left(\left(\nu^1_g > \dfrac{3\mb}{g^2}\right) \cup \left(\nu^2_g > \dfrac{9\mb}{g}\right) \cup \left(M > \dfrac{3\mb}{2}\right)\right) \\ \leq \PP\left(\nu^1_g > \dfrac{3\mb}{g^2}\right) + \PP\left(\nu^2_g > \dfrac{9\mb}{g}\right) + \PP\left(M > \dfrac{3\mb}{2}\right)
    \end{align*}
    The last inequality comes from the union bound. From the Chernoff inequality \eqref{eq:chernoff1}, this gives
    \begin{align}
        \PP\left(M > \dfrac{3\mb}{2}\right) \leq \exp{\left(-\dfrac{\mb}{10}\right)} = \exp{\left(-\Omega(n^{3\theta})\right)}
        \label{eq-bound-m}
    \end{align}
    Combining \eqref{bound-v_1-1}, \eqref{eq:bound-v-2-1} and \eqref{eq-bound-m}, we have
    \begin{align}
        \PP\left(|E_g| > 12\mb\right) \leq \exp{\left(-\Omega(n^{3\theta-2})\right)} + \exp{\left(-\Omega(n^{3\theta-1})\right)} + \exp{\left(-\Omega(n^{3\theta})\right)}
        \label{eq:bound-e-g-1}
    \end{align}
    \item \textbf{Case 2:} $\theta \leq \dfrac{2}{3}$.
    Taking expectation, we obtain
\begin{align*}
    \mathbb{E}(|E_g|) \leq \dfrac{g^2}{2} \mathbb{E}(\nu^1_g) + g\mathbb{E}(\nu^2_g) + \mb \leq \dfrac{g^2}{2}\dfrac{\mb}{g^2} + g\dfrac{3\mb}{g} + \mb = \dfrac{9\mb}{2}
\end{align*}
By the Markov inequality \eqref{eq:markov}, we have
\begin{align}
    \mathbb{P}\left(|E_g| > 2\mb\log^2{\mb} \right) \leq \dfrac{9}{4\log^2{\mb}} = \bigO\left(\dfrac{1}{\log^2{n}}\right)
    \label{eq:bound-e-g-2}
\end{align}
\end{itemize}

We can see that all of the bounds in \eqref{eq:bound-e-g-1} and \eqref{eq:bound-e-g-2} decay to zero significantly faster than $\dfrac{1}{\log_3{n}}$; thereby implying that they are still $o(1)$ after taking a union bound over at most $\log_3{n}$ values of $\ell$. Hence, with probability $1 - o(1)$, \eqref{eq:eg} holds for all $\ell$.

\subsection*{Bounding $d^{(1,1)}_g$}
\label{subsection-bound-d11}
Fix a non-defective block $\mathcal{H}_i$, and let $s = n/g$ denote the block size.
For each $j \in [g]\setminus\{i\}$, consider the set of vertex triples that contain at
least one vertex from $\mathcal{H}_i$ and at least one vertex from $\mathcal{H}_j$,
but are not entirely contained in either block.
Let $S_i$ denote the union of all such triples over $j \neq i$, that is, the set of
triples $\{u,v,w\}$ such that either $\{u,v\} \subseteq \mathcal{H}_i$ and $w\in \mathcal{H}_j$, or $u\in \mathcal{H}_i$ and $\{v,w\} \subseteq \mathcal{H}_j$, for some $j\in[g]\setminus\{i\}$. Each triple in $S_i$ corresponds to a potential hyperedge that intersects
$\mathcal{H}_i$ and exactly one other block. For a fixed $j\neq i$, the number of such triples is
\[
\binom{2s}{3}-2\binom{s}{3}=s^2(s-1),
\]
and since there are $g-1$ choices of $j$, we obtain
\[
|S_i|=(g-1)\cdot s^2(s-1).
\]
For each $\{u,v,w\}\in S_i$, define the Bernoulli random variable
\[
X_{u,v,w}=
\begin{cases}
1, & \text{if } \{u,v,w\}\in E(H),\\
0, & \text{otherwise}.
\end{cases}
\]
Let
\[
X_i^{(1,1)} := \sum_{\{u,v,w\}\in S_i} X_{u,v,w},
\]

which counts the total number of hyperedges that intersect $\mathcal{H}_i$
and exactly one other block.
Under the Erd\H{o}s--R\'enyi hypergraph model, the random variables
$\{X_{u,v,w}\}_{\{u,v,w\}\in S_i}$ are independent and satisfy
$\mathbb{P}(X_{u,v,w}=1)=q$. By linearity of expectation,
\begin{align*}
\mu_{X_i^{(1,1)}} & = \mathbb{E}(X_i^{(1,1)})
= q\,|S_i|
= \dfrac{\mb}{\binom{n}{3}}\,(g-1)s^2(s-1) = \dfrac{6\mb}{n(n-1)(n-2)} \cdot (g-1)\dfrac{n^2}{g^2}\cdot\dfrac{n-g}{g} \\
& = \dfrac{6\mb}{g^2} \cdot \dfrac{g-1}{g} \cdot \dfrac{n}{n-1} \cdot \dfrac{n-g}{n-2} 
\end{align*}
For sufficiently large $n$, we have $2 \leq \mb^{1/3} \leq g \leq n$, which implies $\dfrac{g-1}{g} \leq \dfrac{n-1}{n}$. Consequently,
\begin{align*}
    \mu_{X_i^{(1,1)}} = \dfrac{6\mb}{g^2} \cdot \dfrac{g-1}{g} \cdot \dfrac{n}{n-1} \cdot \dfrac{n-g}{n-2}  \leq \dfrac{6\mb}{g^2} \leq 6\mb^{1/3}
\end{align*}
Let $T_i$ denote the (random) set of hyperedges realized among the triples in $S_i$,
that is,
\[
T_i := \bigl\{\, e=\{u,v,w\}\in E(H) : \{u,v,w\}\in S_i \,\bigr\}.
\]
For each hyperedge $e\in T_i$, define the mapping
\[
\phi_i(e) := j,
\]

where $j\neq i$ is the unique block index such that $e$ intersects $\mathcal{H}_j$.
This mapping is well defined, since $\{\mathcal{H}_1,\dots,\mathcal{H}_g\}$ is a
partition of $V$ into disjoint blocks, and each hyperedge $e\in T_i$ intersects
$\mathcal{H}_i$ and exactly one other block. Recall that $d_g^{(1,1)}(\mathcal{H}_i)$ denotes the number of indices $j \neq i$
such that the block pair $(\mathcal{H}_i,\mathcal{H}_j)$ is defective. It is clear that for every defective block pair $(\mathcal{H}_i,\mathcal{H}_j)$ with
$j \neq i$, there exists at least one hyperedge $e \in T_i$ such that $\phi_i(e)=j$.
Hence, every such index $j$ lies in the image of $\phi_i$, and we obtain
\[
d_g^{(1,1)}(\mathcal{H}_i) \le |\phi_i(T_i)|,
\]

where $\phi_i(T_i) \subseteq [g]\setminus\{i\}$ denotes the set of block indices
corresponding to defective block pairs involving $\mathcal{H}_i$. Since $\phi_i$ is a mapping from $T_i$ to $[g]\setminus\{i\}$, we always have
$|\phi_i(T_i)| \le |T_i| = X_i^{(1,1)}$. Therefore,
\[
d_g^{(1,1)}(\mathcal{H}_i) \le X_i^{(1,1)}.
\]
From the Chernoff bound \eqref{eq:chernoff2}, we can see that
\begin{align*}
    \mathbb{P}\left(d^{(1,1)}_g(\mathcal{H}_i) > 18\overline{m}^{1/3}\right) 
    \leq \mathbb{P}\left(X^{(1,1)}_i > 18\overline{m}^{1/3}\right) \leq \left(\dfrac{e\mu_{X^{(1,1)}_i}}{18\mb^{1/3}}\right)^{18\overline{m}^{1/3}} \leq \left(\dfrac{e}{3}\right)^{18\overline{m}^{1/3}}
\end{align*}
Finally, by a union bound over all $i \in [g]$, it follows that
\begin{align*}
    \mathbb{P}\left(d^{(1,1)}_g > 18\overline{m}^{1/3}\right) 
    = \mathbb{P}\left(\displaystyle\max_{i \in [g]}d^{(1,1)}_g(\mathcal{H}_i) > 18\overline{m}^{1/3}\right) 
    \leq \displaystyle\sum_{i \in [g]}{\mathbb{P}\left(d^{(1,1)}_g(\mathcal{H}_i) > 18\overline{m}^{1/3}\right)} \leq g\cdot\left(\dfrac{e}{3}\right)^{18\overline{m}^{1/3}}
\end{align*}

Because $g\cdot\left(\dfrac{e}{3}\right)^{18\overline{m}^{1/3}} \to 0$ significantly faster than $\dfrac{1}{\log_3{n}}$ when $n \to +\infty$, we may take a further union bound over all levels $\ell$ (with at most $\log_3{n}$ levels). It follows that with probability $1 - o(1)$, \eqref{eq:d11} holds for all values of $\ell$.

\subsection*{Bounding $d^{(1,2)}_g$}
\label{subsection:bound-d_12}
Fix a block $\mathcal{H}_i$ and let $s = n/g$ denote the block size.
Let $S_i^{(1,2)}$ denote the set of vertex triples $\{u,v,w\}$ such that: $u \in \mathcal{H}_i$, $v \in \mathcal{H}_{i_1}$ and $w \in \mathcal{H}_{i_2}$ for some $i_1 \neq i_2$ with $i \notin \{i_1,i_2\}$. Each such triple corresponds to a potential hyperedge that uses exactly one
vertex from $\mathcal{H}_i$ and one vertex from each of two other blocks.
We have:
\[
\left|S_i^{(1,2)}\right| = \binom{g-1}{2} \, s^3.
\]
For each triple $\{u,v,w\} \in S_i^{(1,2)}$, define the Bernoulli random variable
\[
X_{u,v,w} =
\begin{cases}
1, & \text{if } \{u,v,w\} \in E(H), \\
0, & \text{otherwise}.
\end{cases}
\]
At the same time, define
\[
X_i^{(1,2)} := \sum_{\{u,v,w\} \in S_i^{(1,2)}} X_{u,v,w},
\]

which counts the total number of hyperedges that use exactly one vertex from
$\mathcal{H}_i$ and one vertex from each of two distinct other blocks.
By the Erd\H{o}s--R\'enyi hypergraph model, the random variables $\{X_{u,v,w}\}$ are
independent and satisfy $\mathbb{P}(X_{u,v,w}=1)=q$. Taking expectation,
\begin{align}
\mu_{X_i^{(1,2)}} = \mathbb{E}\!\left(X_i^{(1,2)}\right)
= q \cdot \left|S_i^{(1,2)}\right|
= q \binom{g-1}{2} s^3 = \dfrac{3\mb}{g} \cdot \dfrac{g-1}{g} \cdot \dfrac{g-2}{g} \cdot \dfrac{n^2}{(n-1)(n-2)} 
\label{eq:bound-d12-1}
\end{align}
Since $g \leq n$, we have $\dfrac{g-1}{g} \leq \dfrac{n-1}{n}$ and $\dfrac{g-2}{g} \leq \dfrac{n-2}{n}$. Combining this with \eqref{eq:bound-d12-1} yields 
\begin{align*}
    \mu_{X_i^{(1,2)}} \leq  \dfrac{3\mb}{g} \cdot \dfrac{n-1}{n} \cdot \dfrac{n-2}{n} \cdot \dfrac{n^2}{(n-1)(n-2)}  = \dfrac{3\mb}{g} \leq 3\mb^{2/3}
\end{align*}
Recall that $d_g^{(1,2)}(\mathcal{H}_i)$ denotes the number of block pairs
$(\mathcal{H}_{i_1},\mathcal{H}_{i_2})$ such that the triple
$(\mathcal{H}_i,\mathcal{H}_{i_1},\mathcal{H}_{i_2})$ is defective, while
$\mathcal{H}_i$, $\mathcal{H}_{i_1}$, $\mathcal{H}_{i_2}$ and all corresponding
block pairs are non-defective. If a block pair $(\mathcal{H}_{i_1},\mathcal{H}_{i_2})$ contributes to
$d_g^{(1,2)}(\mathcal{H}_i)$, then by definition there exists at least one
hyperedge using exactly one vertex from each of the blocks
$\mathcal{H}_i$, $\mathcal{H}_{i_1}$, and $\mathcal{H}_{i_2}$.
Such a hyperedge is counted by $X_i^{(1,2)}$; thus, by the same argument used for
$d_g^{(1,1)}(\mathcal{H}_i)$, we conclude that
\[
d_g^{(1,2)}(\mathcal{H}_i) \le X_i^{(1,2)}.
\]
Applying a Chernoff bound \eqref{eq:chernoff2}, we obtain
\[
\mathbb{P}\!\left( d_g^{(1,2)}(\mathcal{H}_i) > 9\overline{m}^{2/3} \right)
\le
\mathbb{P}\!\left( X_i^{(1,2)} > 9\overline{m}^{2/3} \right)
\le 
\left(\dfrac{e\mu_{X_i^{(1,2)}}}{9\mb^{2/3}}\right)^{9\overline{m}^{2/3}}
\le
\left(\frac{e}{3}\right)^{9\overline{m}^{2/3}}.
\]
Finally, by a union bound over all $i \in [g]$,
\begin{align*}
\mathbb{P}\!\left(d_g^{(1,2)} > 9\overline{m}^{2/3}\right)
&=
\mathbb{P}\!\left(\max_{i\in[g]} d_g^{(1,2)}(\mathcal{H}_i) > 9\overline{m}^{2/3}\right) \le
\sum_{i\in[g]} \mathbb{P}\!\left( X_i^{(1,2)} > 9\overline{m}^{2/3} \right) \le
g \left(\frac{e}{3}\right)^{9\overline{m}^{2/3}}.
\end{align*}

As $g\left(\dfrac{e}{3}\right)^{9\overline{m}^{2/3}} \to 0$ much faster than $\dfrac{1}{\log_3{n}}$ when $n \to +\infty$, we may take a further union bound over all levels $\ell$ (with at most $\log_3{n}$ levels). It follows that with probability $1 - o(1)$, \eqref{eq:d12} holds for all $\ell$.

\subsection*{Bounding $d^{(2,1)}_g$}
\label{subsection:bound-d_21}
Fix a block pair $\{\mathcal{H}_{i_1},\mathcal{H}_{i_2}\}$ with
$1 \le i_1 < i_2 \le g$, and let $s = n/g$ denote the block size.
For each $j \in [g]\setminus\{i_1,i_2\}$, consider the set $S_{i_1,i_2}^{(2,1)}$ of
vertex triples $\{u,v,w\}$ such that: $u \in \mathcal{H}_{i_1}$, $v \in \mathcal{H}_{i_2}$, and $w \in \mathcal{H}_j$.

Each such triple corresponds to a potential hyperedge that uses exactly one
vertex from each of the blocks
$\mathcal{H}_{i_1}$, $\mathcal{H}_{i_2}$, and $\mathcal{H}_j$. We have:
\begin{align*}
    \left| S_{i_1,i_2}^{(2,1)} \right| = (g-2) \cdot s^3 
\end{align*}
For each $\{u,v,w\}\in S_{i_1,i_2}^{(2,1)}$, define the Bernoulli random variable
\[
X_{u,v,w} =
\begin{cases}
1, & \text{if } \{u,v,w\} \in E(H), \\
0, & \text{otherwise}.
\end{cases}
\]
At the same time, define
\[
X_{i_1,i_2}^{(2,1)} := \sum_{\{u,v,w\} \in S_{i_1,i_2}^{(2,1)}} X_{u,v,w},
\]

which counts the total number of hyperedges that use exactly one vertex from each of
the blocks $\mathcal{H}_{i_1}$, $\mathcal{H}_{i_2}$, and some third block
$\mathcal{H}_j$.
By the Erd\H{o}s--R\'enyi hypergraph model, these random variables are independent
and satisfy:
\begin{align*}
    \mathbb{P}(X_{u,v,w}=1)=q
\end{align*}
By linearity of expectations, we have
\begin{align}
\mu_{X_{i_1,i_2}^{(2,1)}} = \mathbb{E}\!\left( X_{i_1,i_2}^{(2,1)} \right)
& =
(g-2)\, q s^3 = (g-2) \cdot \frac{\overline{m}}{\binom{n}{3}} \cdot \frac{n^3}{g^3} = \dfrac{6\mb}{g^2} \cdot \dfrac{g-2}{g} \cdot \dfrac{n^2}{(n-1)(n-2)}
\label{eq:bound-d21-1}
\end{align}
Since $g\le n$, we have $\dfrac{g-2}{g} \leq \dfrac{n-2}{n}$, and for sufficiently large $n$, we have $\dfrac{n}{n-1} \leq \dfrac{13}{12}$. Combining these inequalities with \eqref{eq:bound-d21-1} yields
\begin{align*}
    \mu_{X_{i_1,i_2}^{(2,1)}} \leq \dfrac{6\mb}{g^2} \cdot \dfrac{13}{12} = \dfrac{13\mb}{2g^2} \leq \dfrac{13\mb^{1/3}}{2}
\end{align*}

Recall that $d_g^{(2,1)}(\mathcal{H}_{i_1},\mathcal{H}_{i_2})$ denotes the number of
indices $j$ such that the triple
$(\mathcal{H}_{i_1},\mathcal{H}_{i_2},\mathcal{H}_j)$ is defective, while
$\mathcal{H}_{i_1}$, $\mathcal{H}_{i_2}$, $\mathcal{H}_j$ and all corresponding
block pairs are non-defective. If a block $\mathcal{H}_j$ contributes to
$d_g^{(2,1)}(\mathcal{H}_{i_1},\mathcal{H}_{i_2})$, then by definition there exists
at least one hyperedge using exactly one vertex from each of the blocks
$\mathcal{H}_{i_1}$, $\mathcal{H}_{i_2}$, and $\mathcal{H}_j$.
Such a hyperedge is counted by $X_{i_1,i_2}^{(2,1)}$; therefore, by the same argument used for
$d_g^{(1,1)}(\mathcal{H}_i)$, one can obtain
\[
d_g^{(2,1)}(\mathcal{H}_{i_1},\mathcal{H}_{i_2})
\le X_{i_1,i_2}^{(2,1)}.
\]
Applying a Chernoff bound \ref{eq:chernoff2}, we obtain
\begin{align*}
    \mathbb{P}\left(d_g^{(2,1)}(\mathcal{H}_{i_1},\mathcal{H}_{i_2}) > 18\overline{m}^{1/3}\right) \leq \mathbb{P}\left(X_{i_1,i_2}^{(2,1)} > 18\overline{m}^{1/3}\right)
    \leq \left(\dfrac{e\mu_{X_{i_1,i_2}^{(2,1)}}}{18\overline{m}^{1/3}}\right)^{18\overline{m}^{1/3}}
    \leq \left(\dfrac{13e}{36}\right)^{18\overline{m}^{1/3}}
\end{align*}
Finally, by a union bound over all $\{i_1,i_2\} \subseteq [g]$, this yields:
\begin{align*}
    \mathbb{P}\left(d_g^{(2,1)} > 18\overline{m}^{\frac{1}{3}}\right) \leq \sum_{\{i_1,i_2\} \subseteq [g]}{\mathbb{P}\left(d_g^{(2,1)}(\mathcal{H}_{i_1},\mathcal{H}_{i_2}) > 18\overline{m}^{\textstyle\frac{1}{3}}\right)} 
    \leq \dfrac{g^2}{2} \cdot \left(\dfrac{13e}{36}\right)^{18\overline{m}^{1/3}} 
\end{align*}

We notice that $\dfrac{g^2}{2} \cdot \left(\dfrac{13e}{36}\right)^{18\overline{m}^{1/3}}  \to 0$ much faster than $\dfrac{1}{\log_3{n}}$ when $n \to +\infty$, thus it is still $o(1)$ after we take a union bound over all levels $\ell$ (with at most $\log_3{n}$ levels). It follows that with probability $1 - o(1)$, \eqref{eq:d21} holds for all $\ell$.

Combining all the above results, we can come to the conclusion that all conditions in the definition of $\mathcal{T}(\epsilon_n)$ hold with probability at least $1 - o(1)$ as $n \to +\infty$, which completes the proof of lemma \ref{lem:typical}.

\section{Analysis of level hypergraph}
\label{section:analysis_of_level_hypergraph}
\subsection{Proof of lemma \ref{lem:expectation}}
It suffices to show that, for any non-defective triple $(\Hc_u,\Hc_v,\Hc_w)$,
\[
\PP[\Ec_{uvw}] \leq \frac{1}{C}
\]
for some constant $C \geq 336$. According to the testing procedure, a given non-defective triple $(\Hc_u,\Hc_v, \Hc_w)$ fails to be identified correctly in a given sequence of $C_1\mb^{1/3}$ tests if either:
\begin{itemize}
  \item $\Hc_u$, $\Hc_v$ and $\Hc_w$ are not assigned to the same test, which occurs with probability $1 - \frac{1}{C_1^2 \mb^{2/3}}$;
  \item $\Hc_u$, $\Hc_v$ and $\Hc_w$ are assigned to the same test, but the outcome is positive because of other hyperedges.
\end{itemize}
Hence, the probability that $(\Hc_u,\Hc_v, \Hc_w)$ is missed in all $C_2\mb^{2/3}$ rounds is
\begin{equation}
\label{eq:prob_non_defective_pair}
    \PP[\Ec_{uvw}] = \left(1 - \frac{1}{C_1^2\mb^{2/3}} + \frac{1}{C_1^2\mb^{2/3}} \PP[Y=1 \mid h(u)=h(v) = h(w)] \right)^{C_2\mb^{2/3}}.
\end{equation}

Let $t := h(u)=h(v)=h(w)$. Since $(\Hc_u,\Hc_v,\Hc_w)$ is assumed to be non-defective, a false positive in test $t$ can only arise from some other defective configuration contained in the same test. To capture all such possibilities, define the following events.

\begin{itemize}
    \item $\Ac^{(1,1)}$: there exist $x \in \{u,v,w\}$ and $z \notin \{u,v,w\}$ such that $h(z) = t$ and $(\Hc_x,\Hc_z)$ are defective but $\Hc_z$ is non-defective.

    \item $\Ac^{(1,2)}$: there exist $x \in \{u,v,w\}$ and distinct $z_1,z_2 \notin \{u,v,w\}$ such that $h(z_1)=h(z_2)=t$,
    $(\Hc_x,\Hc_{z_1},\Hc_{z_2})$ is defective but $(\Hc_x,\Hc_{z_1})$, $(\Hc_x,\Hc_{z_2})$ and $(\Hc_{z_1},\Hc_{z_2})$ are non-defective. 

    \item $\Ac^{(2,1)}$: there exist distinct $x,y \in \{u,v,w\}$ and $z \notin \{u,v,w\}$ such that $h(z)=t$, $(\Hc_x,\Hc_y,\Hc_z)$ \text{is defective}, but $(\Hc_x,\Hc_{z})$ and $(\Hc_{y},\Hc_{z})$ are non-defective. 

    \item $\Ac^{0}$: there exists a defective hyperedge in the induced block hypergraph formed by all blocks $\Hc_z$ satisfying $z \notin \{u,v,w\}$ and $h(z)=t$.
\end{itemize}

Then every false positive outcome for the non-defective triple $(\Hc_u,\Hc_v,\Hc_w)$ is contained in the event
\[
\Ac^{(1,1)} \cup \Ac^{(1,2)} \cup \Ac^{(2,1)} \cup \Ac^{0}.
\]Then 
\begin{align*}
    \PP(Y=1 \mid h(u)=h(v) = h(w))
\le &\PP(\Ac^{(1,1)} \mid h(u)=h(v) = h(w)] + \PP[\Ac^{(1,2)} \mid h(u)=h(v)=h(w))\\
&+\PP(\Ac^{(2,1)} \mid h(u)=h(v)=h(w)] + \PP[\Ac^0 \mid h(u)=h(v)=h(w)) \, .
\end{align*}

From Lemma~\ref{lem:typical}, each non-defective node $u$ has at most $d_{g}^{(1,1)}$ blocks $z$ such that $(\Hc_u,\Hc_z)$ defective and at most $d_{g}^{(1,2)}$ pair blocks $(z_1,z_2)$ such that $(\Hc_u,\Hc_{z_1},\Hc_{z_2})$ defective, so
\[
\PP(\Ac^{(1,1)} \mid h(u)=h(v) = h(w)) \leq \frac{3d_{g}^{(1,1)}}{C_1 \mb^{1/3}} 
  \leq \frac{54\overline{m}^{1/3}}{C_1\mb^{1/3}} = \frac{54}{C_1},
\]
 
Simultaneously, we also have:

\[
\PP(\Ac^{(1,2)} \mid h(u)=h(v) = h(w)) \leq \frac{3d_{g}^{(1,2)}}{C_1^2 \mb^{2/3}} 
  \leq \frac{27\overline{m}^{2/3}}{C_1^2\mb^{2/3}} = \frac{27}{C_1^2},
\]
since $g \in [\mb^{1/3},n]$. Similarly, each pair $(u,v)$ has at most $d_g^{(2,1)}$ blocks $z$ such that $(\Hc_u,\Hc_v,\Hc_z)$ defective, it implies

\[
\PP(\Ac^{(2,1)} \mid h(u)=h(v) = h(w)) \leq \frac{3d_{g}^{(2,1)}}{C_1 \mb^{1/3}} 
  \leq \frac{54\overline{m}^{1/3}}{C_1\mb^{1/3}} = \frac{54}{C_1},
\]

Now consider a test among the $C_1 \mb^{1/3}$ tests. Let $\Bc_1$ be the event that some defective block belongs to the test, and $\Bc_2$ be the event that some defective pair $(\Hc_{z_1},\Hc_{z_2})$ is included in the test with both $\Hc_{z_1}$ and $\Hc_{z_2}$ being non-defective — i.e., the hyperedge appears only between $\Hc_{z_1}$ and $\Hc_{z_2}$. Let $\Bc_3$ be the event that some defective triple $(\Hc_{z_1},\Hc_{z_2},\Hc_{z_3})$ is included in the test with $\Hc_{z_i}$ (for all $i = 1,2,3$)  being non-defective and any pair $(\Hc_{z_i},\Hc_{z_j})$ being non-defective pair. Since $G$ has $M$ hyperedges, there are at most $M$ such triples. Moreover, from Eq.~\eqref{eq:definition-typical-set} used in Lemma~\ref{lem:typical}, there are at most $\PV^1_{\max}$ defective nodes and at most $\PV^2_{\max}$ defective pairs among $\{\Gc_1,\Gc_2,\dots,\Gc_g \}$. Thus,
\begin{align}
\mathbb{P}(\Ac^0 | h(u) = h(v) = h(w)) 
   &\leq \mathbb{P}(\Bc_1) + \mathbb{P}(\Bc_2) + \mathbb{P}(\Bc_3)\\
   &\leq \PV_g^1 \cdot \frac{1}{C_1 \mb^{1/3}} 
       + \PV_g^2 \cdot \frac{1}{C_1^2 \mb^{2/3}} + M \cdot \frac{1}{C_1^3 \mb}\\
   &\leq \frac{3}{C_1} + \frac{9}{C_1^2} + \dfrac{1+\varepsilon_n}{C_1^3} 
       \leq \frac{14}{C_1}\, , \label{eq:bound-Y}
\end{align}

The last inequality comes from the fact that for sufficiently large $n$, we have $\varepsilon_n \leq 1$. Combining the bounds into~\eqref{eq:prob_non_defective_pair} and recalling the choice $C_2 = C_1^3$, we obtain
\[
\PP(\Ec_{uvw}) \leq \left(1 - \frac{1}{C_1^2\mb^{2/3}} + \frac{149}{C_1^3\mb^{2/3}} \right)^{C_2\mb^{2/3}}
  \leq \exp\!\left(-\frac{C_2(C_1-149)}{C_1^3}\right)
  = \exp\!\big(-(C_1-149)\big).
\]
This establishes the claim, since this is at most $\exp(-6)< \frac{1}{336}$ when $C_1 \geq 155$.

\subsection{Proof of lemma \ref{lem:variance}}
	We first upper bound the covariance $\Cov[\rE_{uvw}, \rE_{u'v'w'}]$ for any non-defective triples $(\Hc_u, \Hc_v, \Hc_w)$ and $(\Hc_{u'}, \Hc_{v'},\Hc_w')$. We have
	\begin{equation}
            \label{eq:cova_formula}
	    \Cov[\rE_{uvw}, \rE_{u'v'w'}] = \PP[\Ec_{uvw} \cap \Ec_{u'v'w'}] - \PP[\Ec_{uvw}] \PP[\Ec_{u'v'w'}] \,.
	\end{equation}
Set $B  = C_1 \mb^{1/3}$, $R = C_2 \mb^{2/3}$, and $s = |\{u,v,w\} \cap \{u',v',w'\} |$. Consider a non-defective triple $(\Hc_u, \Hc_v,\Hc_w)$ at level~$\ell$. Let $\Dc_{uvw}$ be the event that the triple $(\Hc_u, \Hc_v,\Hc_w)$ is not identified in one round consisting of $B$ tests, and let $\barD_{uvw}$ denote its complement. We consider the following cases:
	
	\begin{enumerate}
		\item \textbf{Case 1: $s= 0$}. We have
		\begin{equation}\label{eq:lem_cova_prob_Duv_cap}
		    \PP(\Dc_{uvw} \cup \Dc_{u'v'w'}) = 1 - \PP(\barD_{uvw} \cap \barD_{u'v'w'}) \,.
		\end{equation}
		We then observe that
		\begin{equation}
		    \label{eq:lem_cova_Dc_uv_cap}
            \begin{split}
                \PP(\barD_{uvw} \cap \barD_{u'v'w'}) 
			&= \frac{B(B - 1)}{B^{6}} \PP \left( \barD_{uvw} \cap \barD_{u'v'w'} \,\middle|\, h(u) = h(v) = h(w), h(u') = h(v') = h(w'), h(u) \neq h(u') \right) \\
			&\quad + \frac{1}{B^5} \PP \left( \barD_{uvw} \cap \barD_{u'v'w'} \,\middle|\, h(u) = h(v)  = h(w)= h(u') = h(v') = h(w') \right) \\
			&= \frac{1}{B^4} \PP \left( \barD_{uvw} \cap \barD_{u'v'w'} \,\middle|\, h(u) = h(v) = h(w), h(u') = h(v') = h(w'), h(u) \neq h(u') \right) + \bigO\left( \frac{1}{B^5} \right) \,.
            \end{split}
		\end{equation}
		Set
        \begin{align*}
            \gamma &\coloneqq \PP\left( \barD_{uvw} \,\middle|\, h(u) = h(v) = h(w) \right) \, \\
            \gamma' &\coloneqq \PP\left( \barD_{u'v'w'} \,\middle|\, h(u') = h(v') = h(w') \right) \, \\
            \beta &\coloneqq \PP\left( \barD_{uvw} \cap \barD_{u'v'w'} \,\middle|\, h(u) = h(v) = h(w), h(u') = h(v') = h(w'), h(u) \neq h(u') \right) \, .
        \end{align*}

		Let $A$ be the event that $h(u) = h(v) = h(w)$. By the law of total probability, we have
        \begin{align*}
            \PP(\barD_{uvw}) = \PP(A) \cdot \PP(\barD_{uvw} \mid A) + \PP(\bar{A}) \cdot \PP(\barD_{uvw} \mid \bar{A})
        \end{align*}
        If $\Hc_u, \Hc_v$ and $\Hc_w$ are not placed inside the same test, then the triple $(\Hc_u,\Hc_v,\Hc_w)$ can't be detected, it follows that $\PP(\barD_{uvw} \mid \bar{A}) = 0$. Consequently 
        
        \begin{align}
            \PP(\barD_{uvw}) = \PP(A) \cdot \PP(\barD_{uvw} \mid A) = \dfrac{B}{B^3} \cdot \gamma = \dfrac{\gamma}{B^2} \Rightarrow \PP[\Dc_{uvw}] = 1 - \frac{\gamma}{B^2}
            \label{eq:lem_cova_D_uv}
        \end{align}
        Because each round is executed independently, we have
        \begin{align}
            \PP(\Ec_{uvw}) = \left(1 - \dfrac{\gamma}{B^2} \right)^{R}
            \label{eq:lem_cova_E_uv}
        \end{align}
        Similarly, we can also establish that 
        \begin{align}
            \PP[\Dc_{u'v'w'}] = 1 - \dfrac{\gamma'}{B^2}, \quad \PP(\Ec_{u'v'w'}) = \left(1 - \dfrac{\gamma'}{B^2} \right)^{R}
            \label{eq:lem_cova_u'v'w'}
        \end{align}
        From Eq.~\eqref{eq:lem_cova_Dc_uv_cap}, one has  
		\begin{equation} \label{eq:lem_cova_D_uv_cup}
		    \PP(\Dc_{uvw} \cup \Dc_{u'v'w'}) = 1 - \frac{\beta}{B^4} + \bigO\left( \frac{1}{B^5} \right) \,.
		\end{equation}
		Moreover, since $\PP(\Dc_{uvw} \cap \Dc_{u'v'w'}) = \PP(\Dc_{uvw}) + \PP(\Dc_{u'v'w'}) - \PP(\Dc_{uvw} \cup \Dc_{u'v'w'})$. From Eqs.~\eqref{eq:lem_cova_D_uv}, \eqref{eq:lem_cova_u'v'w'} and \eqref{eq:lem_cova_D_uv_cup}, we have
		\begin{equation} \label{eq:lem_cova_Duv_cap_Du'v'}
		    \begin{split}
		        \PP(\Dc_{uvw} \cap \Dc_{u'v'w'}) 
			& =  \left(1 - \frac{\gamma}{B^2} \right) + \left(1 - \frac{\gamma'}{B^2} \right) - 1 + \frac{\beta}{B^4} + \bigO\left( \frac{1}{B^5} \right) \\
            & = 1 - \frac{\gamma + \gamma'}{B^2}  + \frac{\beta}{B^4} + \bigO\left( \frac{1}{B^5} \right)  \,.
		    \end{split}
		\end{equation}
		Since each round is independent, it follows that
		\begin{align}
			\PP(\Ec_{uvw} \cap \Ec_{u'v'w'}) = \left( 1 - \frac{\gamma + \gamma'}{B^2}  + \frac{\beta}{B^4} + \bigO\left( \frac{1}{B^5} \right) \right)^R \,,
            \label{eq:lem_cova_Euv_cap_Eu'v'}
		\end{align}
		 Substituting \eqref{eq:lem_cova_Euv_cap_Eu'v'} and \eqref{eq:lem_cova_E_uv} into~\eqref{eq:cova_formula} gives 
		\begin{align*}
			\left|\Cov[\rE_{uvw}, \rE_{u'v'w'}]\right| 
			&= \left| \left( 1 - \frac{\gamma + \gamma'}{B^2}  + \frac{\beta}{B^4} + \bigO\left( \frac{1}{B^5} \right) \right)^R - \left( 1 - \frac{\gamma}{B^2} \right)^{R} \left( 1 - \frac{\gamma'}{B^2} \right)^{R} \right| \\
            & = \left| \left(1 - \frac{\gamma + \gamma'}{B^2}  + \frac{\beta}{B^4} + \bigO\left( \frac{1}{B^5} \right) \right)^R - \left(1 - \frac{\gamma + \gamma'}{B^2} + \dfrac{\gamma\gamma'}{B^4}\right)^{R}  \right| \\
			& \le R\left|1 - \frac{\gamma + \gamma'}{B^2}  + \frac{\beta}{B^4} + \bigO\left( \frac{1}{B^5} \right) - \left(1 - \frac{\gamma + \gamma'}{B^2} + \dfrac{\gamma\gamma'}{B^4} \right) \right| \\
            & = R \left| \dfrac{\beta - \gamma\gamma'}{B^4} + \bigO\left( \frac{1}{B^5} \right)\right| = R \cdot \bigO\left( \frac{1}{B^4} \right) = \bigO\left( \frac{1}{B^2} \right)
		\end{align*}
        
		Here, we used the inequality $|x^R - y^R| \leq R|x-y|$ for $x,y \in [0;1]$, recalling that $B  = C_1 \mb^{1/3}$, $R = C_2 \mb^{2/3}$, $C_2 = C_1^3$ and $\gamma, \gamma',\beta \in [0,1]$.
		
		\item \textbf{Case 2: $s= 1$}. Without loss of generality, assume that $u = u'$. Similar to Eq.~\eqref{eq:lem_cova_Dc_uv_cap} in Case 1, we have
		\begin{align*}
			\PP(\barD_{uvw} \cap \barD_{uv'w'}) = \frac{1}{B^4} \PP\left( \barD_{uvw} \cap \barD_{uv'w'} \,\middle|\, h(u) = h(v) = h(w) = h(v') = h(w') \right) \,.
		\end{align*}
		Set
        \begin{align*}
            \gamma &\coloneqq \PP\left( \barD_{uvw} \,\middle|\, h(u) = h(v) = h(w) \right) \, \\
            \gamma' &\coloneqq \PP\left( \barD_{uv'w'} \,\middle|\, h(u) = h(v') = h(w') \right) \, \\
            \beta &\coloneqq \PP\left( \barD_{uvw} \cap \barD_{uv'w'} \,\middle|\, h(u) = h(v) = h(w) = h(v') = h(w') \right) \, .
        \end{align*}
        With a similar line of reasoning to case 1, we can see that 
        \begin{align*}
             \begin{cases}
                 \PP(\Dc_{uvw}) = 1 - \dfrac{\gamma}{B^2},\quad \PP(\Dc_{uv'w'}) = 1 - \dfrac{\gamma'}{B^2} \\ 
             \PP(\Dc_{uvw} \cup \Dc_{uv'w'}) = 1 - \PP(\barD_{uvw} \cap \barD_{uv'w'}) = 1 - \dfrac{\beta}{B^4}
             \end{cases}
        \end{align*}
        Then, similar to~\eqref{eq:lem_cova_Duv_cap_Du'v'}, we have  
		\begin{align*}
			\PP(\Dc_{uvw} \cap \Dc_{uv'w'}) 
			&= \PP(\Dc_{uvw}) + \PP(\Dc_{uv'w'}) - \PP(\Dc_{uvw} \cup \Dc_{uv'w'}) = 1 - \frac{\gamma + \gamma'}{B^2}  + \frac{\beta}{B^4}
		\end{align*}
		Since each round is executed independently, it follows that
		\begin{align*}
            \PP(\Ec_{uvw}) = \left(1 - \dfrac{\gamma}{B^2}\right)^R, \PP(\Ec_{uv'w'}) = \left(1 - \dfrac{\gamma'}{B^2}\right)^R, \quad  \PP(\Ec_{uvw} \cap \Ec_{uv'w'}) = \left(1 -  \dfrac{\gamma + \gamma'}{B^2} + \dfrac{\beta}{B^4} \right)^{R}
		\end{align*}
		Therefore, by the same argument as in Case 1, we have:
		\begin{align*}
			\left|\Cov[\rE_{uvw}, \rE_{uv'w'}]\right| 
			&= \left|\left(1 -  \dfrac{\gamma + \gamma'}{B^2} + \dfrac{\beta}{B^4} \right)^{R}  - \left(1 - \dfrac{\gamma}{B^2}\right)^R\left(1 - \dfrac{\gamma'}{B^2}\right)^R \right| \\ & = \left| \left(1 - \frac{\gamma + \gamma'}{B^2}  + \frac{\beta}{B^4} \right)^R - \left(1 - \frac{\gamma + \gamma'}{B^2} + \dfrac{\gamma\gamma'}{B^4}\right)^{R}  \right| \\
            & \le R\left|1 - \frac{\gamma + \gamma'}{B^2}  + \frac{\beta}{B^4} - \left(1 - \frac{\gamma + \gamma'}{B^2} + \dfrac{\gamma\gamma'}{B^4} \right)\right|
            \\& = R\left|\dfrac{\beta - \gamma\gamma'}{B^4}\right| = R \cdot \bigO\left( \frac{1}{B^4} \right) = \bigO\left( \frac{1}{B^2} \right) 
		\end{align*}

     \item \textbf{Case 3: $s= 2$}. Without loss of generality, assume that $u = u'$ and $v = v'$. Similar to Eq.~\eqref{eq:lem_cova_Dc_uv_cap} in Case 1 and Case 2, we have
		\begin{align*}
			\PP(\barD_{uvw} \cap \barD_{uvw'}) = \frac{1}{B^3} \PP\left( \barD_{uvw} \cap \barD_{uvw'} \,\middle|\, h(u) = h(v) = h(w) = h(w') \right) \,.
		\end{align*}
		Set
        \begin{align*}
            \gamma &\coloneqq \PP\left( \barD_{uvw} \,\middle|\, h(u) = h(v) = h(w) \right) \, \\
            \gamma' &\coloneqq \PP\left( \barD_{uvw'} \,\middle|\, h(u) = h(v) = h(w') \right) \, \\
            \beta &\coloneqq \PP\left( \barD_{uvw} \cap \barD_{uvw'} \,\middle|\, h(u) = h(v) = h(w) = h(w') \right) \, .
        \end{align*}
        Applying the same argument as in Case 1, we obtain 
        \begin{align*}
            \PP(\Dc_{uvw}) = 1 - \dfrac{\gamma}{B^2},\quad \PP(\Dc_{uvw'}) = 1 - \dfrac{\gamma'}{B^2},\quad\PP(\Dc_{uvw} \cup \Dc_{uv'w'}) = 1 - \PP(\barD_{uvw} \cap \barD_{uv'w'}) = 1 - \dfrac{\beta}{B^3}
        \end{align*}
        
        Then, similar to~\eqref{eq:lem_cova_Duv_cap_Du'v'}, we have  
		\begin{align*}
			\PP(\Dc_{uvw} \cap \Dc_{uvw'}) 
			= \PP(\Dc_{uvw}) + \PP(\Dc_{uvw'}) - \PP(\Dc_{uvw} \cup \Dc_{uvw'})= 1 - \frac{\gamma + \gamma'}{B^2} + \frac{\beta}{B^3}  \,,
		\end{align*}
		Since the rounds are independent, it follows that
		\begin{align*}
			\PP(\Ec_{uvw}) = \left(1 - \dfrac{\gamma}{B^2}\right)^R,\quad \PP(\Ec_{uvw'}) = \left(1 - \dfrac{\gamma'}{B^2}\right)^R, \quad  \PP(\Ec_{uvw} \cap \Ec_{uvw'}) = \left( 1 - \frac{\gamma + \gamma'}{B^2} + \frac{\beta}{B^3} \right)^{R} \,.
		\end{align*}
		Proceeding as in Case 1, we obtain
		\begin{align*}
			\left|\Cov[\rE_{uvw}, \rE_{uvw'}]\right| 
			&= \left|\left(1 - \frac{\gamma + \gamma'}{B^2} + \frac{\beta}{B^3}\right)^R  - \left(1 - \frac{\gamma}{B^2} \right)^{R}\left(1 - \frac{\gamma'}{B^2} \right)^{R}\right| \\
            & =\left|\left(1 - \frac{\gamma + \gamma'}{B^2} + \frac{\beta}{B^3}\right)^R  - \left(1 - \frac{\gamma + \gamma'}{B^2} + \dfrac{\gamma\gamma'}{B^4}\right)^{R}\right| \\
            & \leq R \left|1 - \frac{\gamma + \gamma'}{B^2} + \frac{\beta}{B^3} - \left(1 - \frac{\gamma + \gamma'}{B^2} + \dfrac{\gamma\gamma'}{B^4}\right)\right| \\
            & = R \left|\dfrac{\beta}{B^3} - \dfrac{\gamma\gamma'}{B^4}\right| = R \cdot \bigO\left( \frac{1}{B^3} \right) = \bigO\left( \frac{1}{B} \right)
		\end{align*}
	\end{enumerate}
We note that
\[
\Var_\ell(\rE_{uvw}) \le \mathbb{E}(\rE_{uvw}) = \PP(\Ec_{uvw}) \le
\exp(-(C_1-149)).
\]
Hence,
\[
\sum_{u,v,w}\Var_\ell(\rE_{uvw})
=
\bigO(\rEm).
\]
From the above analysis, we have
\[
\left|\Cov(\rE_{uvw},\rE_{u'v'w'})\right|
\leq
\begin{cases}
\bigO(B^{-2}), & s=0,\\
\bigO(B^{-2}), & s=1,\\
\bigO(B^{-1}), & s=2.
\end{cases}
\]
Since each of the three overlap cases contains at most $\bigO(\rEm^2)$ pairs of non-defective triples, summing the above covariance bounds yields
\[
\sum_{u,v,w,u',v',w'}
\Cov\!\left(\rE_{uvw},\rE_{u'v'w'}\right)
=
\bigO\!\left(\frac{\rEm^2}{B}\right),
\]
since $B^{-2}\le B^{-1}$. Consequently,
\[
\Var_\ell\!\left(\sum_{u,v,w}\rE_{uvw}\right)
=
\bigO(\rEm)
+
\bigO\!\left(\frac{\rEm^2}{B}\right)
=
\bigO\!\left(\frac{\rEm^2}{B}\right),
\]
where the last equality follows from $\rEm=\Omega(\mb)$; see~\eqref{eq:definition-typical-set}. This completes the proof. 

\end{document}